\documentclass[
    twocolumn,
    english,
    aps,
    pra,
    longbibliography,
    superscriptaddress,
    amsmath,
    amssymb,
    floatfix,
    nofootinbib,
    10pt
]{revtex4-2}

\usepackage[colorlinks=true,citecolor=teal,linkcolor=purple,urlcolor=brown]{hyperref}

\usepackage[utf8]{inputenc}
\usepackage{
    algorithm,
    algpseudocode,
    amsmath,
    amssymb,
    amsthm,
    csquotes,
    dsfont,
    graphicx,
    mathtools,
    microtype,
    natbib,
    nicefrac,
    physics,
    url,times,
    xcolor
}

\usepackage{tikz}
\usetikzlibrary{quantikz2}

\usepackage{hyperref}

\newtheorem{theorem}{Theorem}

\newtheorem{lemma}[theorem]{Lemma}

\newtheorem{corollary}[theorem]{Corollary}

\newtheorem{innercustomgeneric}{\customgenericname}
\providecommand{\customgenericname}{}
\newcommand{\newcustomtheorem}[2]{
  \newenvironment{#1}[1]
  {
   \renewcommand\customgenericname{#2}
   \renewcommand\theinnercustomgeneric{##1}
   \innercustomgeneric
  }
  {\endinnercustomgeneric}
}

\newcustomtheorem{customthm}{Theorem}
\newcustomtheorem{customlemma}{Lemma}
\newcustomtheorem{customproposition}{Proposition}
\newcustomtheorem{customcorollary}{Corollary}
\newcustomtheorem{customdefinition}{Definition}

\newenvironment{remark}[1][Remark]{\begin{trivlist}
\item[\hskip \labelsep {\bfseries #1}]}{\end{trivlist}}

\newcommand{\DQI}{\ensuremath{\lvert\mathrm{DQI}\rangle}}
\newcommand{\PDQI}{\ensuremath{\calP_\mathrm{DQI}}}
\newcommand{\sDQI}{\ensuremath{\langle s\rangle_\mathrm{DQI}}}

\DeclareMathOperator{\avg}{avg}

\providecommand{\tc}{\ensuremath{\Tilde{c}}}

\providecommand{\tf}{\ensuremath{\Tilde{f}}}

\providecommand{\to}{\ensuremath{\Tilde{o}}}

\providecommand{\calP}{\ensuremath{\mathcal{P}}}

\providecommand{\calS}{\ensuremath{\mathcal{S}}}

\DeclareMathOperator*{\bbE}{\ensuremath{\mathbb{E}}}
\providecommand{\bbF}{\ensuremath{\mathbb{F}}}

\providecommand{\bbN}{\ensuremath{\mathbb{N}}}

\DeclareMathOperator*{\bbP}{\ensuremath{\mathbb{P}}}

\providecommand{\bbZ}{\ensuremath{\mathbb{Z}}}

\newcommand{\F}{\mathbb{F}}

\newif\ifverbose
\verbosetrue

\newcommand{\stkout}[1]{\ifmmode\text{\st{\ensuremath{#1}}}\else\st{#1}\fi}

\graphicspath{figures/}

\newcommand{\fu}{Dahlem Center for Complex Quantum Systems, Freie Universit\"{a}t Berlin, 14195 Berlin, Germany}
\newcommand{\hzb}{Helmholtz-Zentrum Berlin f{\"u}r Materialien und Energie, 14109 Berlin, Germany}
\newcommand{\hhi}{Fraunhofer Heinrich Hertz Institute, 10587 Berlin, Germany}
\newcommand{\ibmzrl}{IBM Research, 8803 Rueschlikon, Switzerland}
\newcommand{\ibmisrael}{IBM Research, Haifa, 3498825, Israel}

\begin{document}

\title{Approximate sampling from decoded quantum interferometry\\via Markov chain Monte Carlo methods}
\date{\today}

\author{Elies Gil-Fuster}
\affiliation{\fu}
\affiliation{\hhi}
\affiliation{\ibmzrl}

\author{Matan Ninio}
\affiliation{\ibmisrael}

\author{Lennart Bittel}
\affiliation{\fu}

\author{Yishai Shimoni}
\affiliation{\ibmisrael}

\author{Jens Eisert}
\affiliation{\fu}
\affiliation{\hhi}
\affiliation{\hzb}

\author{Stefan Woerner}
\affiliation{\ibmzrl}

\author{Almudena Carrera V\'azquez}
\affiliation{\ibmzrl}

\begin{abstract}
Optimization problems are among the leading candidates for industrially relevant quantum advantage.
Decoded quantum interferometry (DQI) has been proposed to tackle approximate optimization, establishing a connection to classical decoding problems.
While previous work has primarily focused on the theoretical complexity of DQI, comparatively little is known about its empirical performance relative to classical algorithms.
In this work, we shed further light on the complexity of DQI and investigate numerically whether classical sampling methods can emulate the optimization capabilities of DQI.
We first present a simplified analytical characterization of DQI that connects its expected performance to binomial statistics, and we identify concrete obstacles in further studying the complexity of DQI.
Exploiting the fact that DQI output probabilities are efficiently computable, we apply
Markov chain Monte Carlo (MCMC) techniques, particularly block-Gibbs sampling, to sample from the induced distribution.
We study the runtime scaling of these methods for two optimization problems called max-XORSAT, where we reach beyond $1000$ effective qubits; and OPI, where we reach beyond $150$ effective qubits.
Our results show that MCMC algorithms can reliably attain the approximation ratios expected from DQI across a broad range of problem sizes.
In OPI, in the regime where a super-polynomial advantage is claimed for DQI, we observe an empirical runtime for MCMC that scales approximately as $1.1^{n}$, indicating exponential growth with a comparatively small base.
Our findings do not refute existing quantum advantage claims but provide new empirical evidence that classical sampling algorithms can closely match DQI's optimization performance, offering a more nuanced perspective on the practical advantage of DQI.
\end{abstract}

\maketitle

\section{Introduction
\label{s:introduction}}

\begin{figure}[t]
    \centering
    \includegraphics{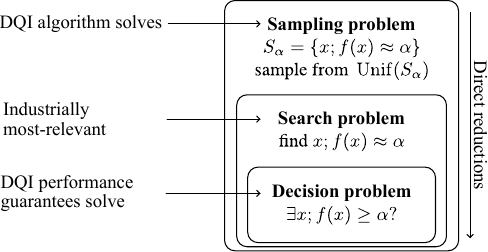}
    \caption{
        \textbf{Nested problems in combinatorial optimization.}
        Sketch for three possible natural problems arising in combinatorial optimization.
        In the context of \emph{decoded quantum interferometry} (DQI), both the \emph{decision} and \emph{sampling} problems can be solved efficiently for carefully chosen threshold $\alpha$ (see Eq.~\eqref{eq:dqi_score}).
        We tackle the \emph{search} problem with classical algorithms and report on their empirical efficiency.
    }
    \label{fig:summary_problems}
\end{figure}

Combinatorial optimization problems play a central role in numerous industrial applications, including routing, scheduling, and resource allocation.
As such, several attempts have been made in the form of concrete quantum algorithms \cite{MontanaroOverview,dalzell2023quantum,MindTheGaps} for combinatorial optimization~\cite{Abbas2024,OptimizationAdvantages,Jordan2025optimization,koch2025qoblib}.
One recent framework, dubbed \emph{decoded quantum interferometry} (DQI)~\cite{Jordan2025optimization}, has attracted much attention owing to both strong performance guarantees and a remarkably distinct flavor when compared to other quantum approaches to optimization.
Rather than directly outputting a single solution, DQI induces a probability distribution biased toward good solutions.
Furthermore, producing this distribution follows a connection to the decoding problem in error correction, re-purposing Regev's reduction~\cite{Regev2009}.

The DQI framework has been studied from different perspectives.
Several works characterized the possible limitations of the algorithm, both from complexity-theoretic considerations~\cite{marwaha2025complexitydecodedquantuminterferometry,kramer2026tight,kramer2026approximability,sun2026worst,anschuetz2025decoded,parekh2025maxcut_dqi}, robustness against hardware noise~\cite{bu2026decoded}, and specific industrial optimization problems~\cite{sabater2026towards,thelen2026constraint}.
In parallel, concrete improvements have been made in the design and implementation of the quantum algorithm~\cite{khattar2025dqi_opi,patamawisut2025quantum,chailloux2024softdecoders,chailloux2025opi,blanvillain2026quantum,horinaga2026worst}.
The framework has also been extended to other classes of classical optimization problems~\cite{gu2025algebraicgeometrycodesdecoded, bu2026multivariate} and other genuinely quantum computational tasks~\cite{schmidhuber2025hamiltonian_dqi,bu2026hamiltonian}.

For an optimization problem called \emph{optimal polynomial intersection} (OPI), Ref.~\cite{Jordan2025optimization} claimed a super-polynomial quantum speed-up for DQI against all known classical algorithms, in specific parameter regimes.
This claim has so far stood unchallenged.
At the same time, a distinct property of DQI is that it remains challenging to identify the individual quantum ingredients that yield the purported advantage.
While certain progress has been made in the front of analytically characterizing the complexity of DQI~\cite{marwaha2025complexitydecodedquantuminterferometry}, no final verdicts have been reached.
On the empirical front, precious little is known about precisely \emph{how} DQI performs against classical algorithms at different scales.

We identify a hierarchy of tasks within combinatorial optimization: decision, search, and sampling problems, which we illustrate in Fig.~\ref{fig:summary_problems}.
In broad terms, usual complexity-theoretic arguments apply to the decision problem, the search problem may be the more practically-relevant one, and the DQI algorithm natively solves the sampling problem.
The division between decision and search highlights that certain results in complexity theory may be of limited interest for industrially-relevant problems.
Further, the division between search and sampling is reminiscent of foundational results from early demonstrations of random circuit sampling~\cite{SupremacyReview} as also discussed in Ref.~\cite{marwaha2025complexitydecodedquantuminterferometry}.

In this work, we shed further light on the complexity of DQI, both analytically and numerically.
We first derive a novel simplified analysis of the performance of DQI, which we use to identify an obstacle to proving that sampling from the DQI distribution may be classically hard.
We next outline another obstacle showing that the search problem may not be directly reducible to the decision problem, as is sometimes the case.
Furthermore, we propose using classical \emph{sampling} algorithms to tackle \emph{optimization} problems, emulating the inner workings of the DQI algorithm.
In particular, we employ usual forms of \emph{Markov chain Monte Carlo} (MCMC) methods~\cite{liu2008,gilks1995markov,robert_monte_2004}. 
To this aim, we numerically test the performance of MCMC methods across sizes, reaching over $100$ qubits in the regime where the claims for super-polynomial advantage hold for DQI.

Our analytical results elucidate a deep fact: a necessary condition on the optimization problem for DQI to be successful -- that the dual error-correcting code has high distance -- immediately implies that the decision version of that problem is classically easy.
A direct consequence of this fact is that studying the complexity of DQI requires studying the search version of the optimization problem, which is unusual.
From our analysis, we also identify concrete obstacles on the usual path to showing that sampling is hard (via Stockmeyer's algorithm and reductions to hardness of approximate counting~\cite{SupremacyReview, aaronson_computational_2010}).
We conclude that different, potentially novel, tools are required to study the classical hardness of sampling from the output distribution of DQI.
In parallel, our numerical results offer first concrete glimpses into the empirical complexity of the optimization problems tackled by DQI.
We find that usual MCMC methods reach comparable performance to DQI in time $\propto1.1^n$, where $n$ is the number of qubits of the output state of DQI.
We thus do not challenge the claims of 
a super-polynomial advantage over known classical algorithms by DQI.
Rather, considering the small base of the exponential, our results call for a nuanced interpretation of the quantum advantage claims by DQI, as they may only become relevant for very large instances.

\section{Preliminaries}\label{s:preliminaries}

\subsection{Max-LINSAT and classical linear codes}\label{ss:max-LINSAT}

Suppose we are given a prime number $p$ and a corresponding field $\F_p$, a matrix $B \in \F_p^{m \times n}$, sets $F_i \subset \F_p$ of size $r_i = |F_i|$ for $i=1, \ldots, m$, and $F = \{F_1, \ldots, F_m\}$.
We refer to $n$ as the number of variables, $m$ as the number of constraints, and we specialize to the case where all sets are of the same size $r=\lvert F_i\rvert$, for $i\in\{1,\ldots,m\}$.
Then, max-LINSAT is defined as the problem of finding $x \in \F_p^n$ that satisfies as many of the $m$ constraints $\sum_{j=1}^n B_{i,j} x_j \in F_i$ as possible.
Let $f_i: \F_p \rightarrow \{-1, +1\}$ denote the indicator functions taking the value $+1$ if $\sum_{j=1}^n B_{i,j}x_j \in F_i$ and $-1$ otherwise.
The total score $f(x) = \sum_{i=1}^m f_i(\langle b_i,x\rangle)$ equals the number of satisfied constraints minus the number of unsatisfied ones, where $b_i$ denotes the $i^\text{th}$ row of $B$.
Several different problems can be specified for max-LINSAT, all revolving around $(\arg)\max_{x \in \F_p^n} f(x)$.
Below, we alternate between the objective function $f(x)$ and the number of satisfied constraints $s(x)$, which are related by $s(x) = (f(x)+m)/2$.
The best achievable fraction of satisfied constraints $s(x)/m$ is commonly referred to as the \emph{approximation ratio}.

The coefficient matrix $B\in\bbF_p^{m\times n}$ specifies an error-correcting code, which we refer to as the \emph{dual code} $C^\perp$.
The dual code $C^\perp$ is the linear code with $B^\intercal\in\bbF_p^{n\times m}$ as its parity-check matrix:
\begin{align}
    C^\perp&\coloneqq\{c\in\bbF_p^m\mid B^\intercal c=0\}.
\end{align}
The main relevant property in this work is the \emph{distance} of the code $d$, defined as the shortest Hamming distance between any two codewords $d\coloneqq \min_{c\neq c'\in C^\perp} \lVert c-c'\rVert_0$.
Given a noisy codeword $\tc=c+e$, arising from a valid codeword $c\in C^\perp$ and a non-trivial error $e\in\bbF_p^m$, the \emph{decoding problem} consists of identifying $e$.
For a given decoder, we say \enquote{it can decode $\ell$ errors} if the decoder is successful for all errors of Hamming weight up to $\ell$.
Foundational results in coding theory dictate that the maximum number of decodable errors fulfills $2\ell+1\leq d$.

Mirroring the analysis of Ref.~\cite{Jordan2025optimization}, we focus on two special cases of max-LINSAT, and their corresponding linear codes, which we introduce next.

\paragraph{Max-XORSAT.}

Max-XORSAT is the special case of max-LINSAT with $p=2$ and $r=1$.
In max-XORSAT, each constraint has the form $\sum_{j=1}^n B_{i,j} x_j = v_i$ for $v_i \in \F_2$ and $x \in \F_2^n$.  
The family of instances we study follows the setting of the original DQI paper~\cite{Jordan2025optimization}: the coefficient matrix $B \in \F_2^{m \times n}$ is drawn from a random irregularly sparse
ensemble with $m \approx 2n$ constraints.
The precise distribution over matrices $B$ is a variant of Gallagher's ensemble with fine-tuned sparsity.

The dual codes corresponding to sparse max-XORSAT are well-known as \emph{Low-Density Parity Check} (LDPC) codes.
Usual decoder algorithms for LDPC codes are based on Belief-Propagation (BP).
Calculating the exact distance of an individual LDPC code is hard, and instead one relies on heuristic and numerical estimations.
As a consequence, BP-based decoders may fail on a small fraction $\varepsilon$ of errors of a given Hamming weight $\ell$.

\paragraph{Optimal polynomial intersection.}

Optimal polynomial intersection (OPI) is the special case of max-LINSAT where the coefficient-matrix $B\in\bbF_p^{m\times n}$ is a \emph{Vandermonde matrix}: given a primitive root $\gamma\in\bbF_p^\ast$, we have $B_{i,j}=\gamma^{i(j-1)}$.
Following the original reference~\cite{Jordan2025optimization}, we set the number of constraints to be $m = p-1$, and the number of variables $n$ as $n\approx p/2$.
The name of this problem comes from the following observation: given a primitive root $\gamma\in\bbF_p^\ast$, any symbol string $x\in\bbF_p^n$ can be interpreted as the $n$ coefficients of a polynomial $Q$ of degree $n-1$.
The constraint for each $i\in\{1,\ldots,m\}$ then asks that the image $Q(\gamma^i) = \sum_{j=1}^n x_j \gamma^{i(j-1)}$ lies in the set $F_i$.

A linear code $C^\perp$ whose parity-check matrix $B^\intercal\in\bbF_p^{n\times m}$ is a Vandermonde matrix is known as a Reed-Solomon code with distance $n+1$.
In terms of decoders, the Berlekamp-Massey algorithm~\cite{Berlekamp2015} decodes up to $\ell=\lfloor n/2\rfloor$ errors deterministically.
We summarize the instance parameters in Table~\ref{tab:OPI_instance}.

\begin{table}[h]
    \centering
    \caption{\textbf{Summary of instance parameters for OPI.}}
    \begin{tabular}{c|c}\hline\hline
        \bf Notation & \bf Explanation \\\hline\hline
        $p$ & Prime, specifies the field $\bbF_p$ \\
        $\gamma\in\bbF_p^\ast$ & Primitive root, specifies the constraints $Q(\gamma^i)\in F_i$ \\
        $m=p-1$ & Domain size of polynomial $Q$, number of constraints \\
        $n\approx p/2$ & degree of $Q$ $+1$, number of variables \\
        $B_{i,j}=\gamma^{i(j-1)}$ & Vandermonde coefficient matrix \\
        $\ell=\lfloor n/2\rfloor$ & Number of decodable errors for Reed-Solomon code \\\hline\hline
    \end{tabular}
    \label{tab:OPI_instance}
\end{table}

\subsection{Decoded quantum interferometry \label{ss:DQI}}

Decoded quantum interferometry (DQI) is a quantum algorithm that produces good solutions to certain families of max-LINSAT instances~\cite{Jordan2025optimization}.
The inner workings of the algorithm instantiate Regev's reduction~\cite{Regev2009}, reducing the optimization problem to the decoding problem in error correction via the quantum Fourier transform.
The main goal of DQI is to prepare a quantum state which, upon measurement in the computational basis, induces a distribution over symbol strings $x\in\bbF_p^n$ that is biased toward higher scores $f(x)$.

More precisely, consider a degree-$\ell$ polynomial of the score function $P(f)=\sum_{k=0}^\ell \alpha_k f^k$, where $\alpha_0,\ldots,\alpha_\ell$ are the real-valued coefficients.
Then, the amplitudes of the output state of DQI, which we refer to as $\DQI$, are captured by such a polynomial:
\begin{align}
    \DQI \coloneqq \sum_{x\in\bbF_p^n} P(f(x))\lvert x\rangle = \sum_{x\in\bbF_p^n}\sum_{k=0}^\ell \alpha_k f^k(x)\lvert x\rangle,
\end{align}
where the coefficients $(\alpha_k)_{k=0}^\ell$ are chosen to ensure normalization.
Here, symbol strings $x \in \F_p^n$ are encoded using $n \lceil\log_2(p)\rceil$ qubits.
For DQI to be successful, the main requirement is to provide a decoder capable of decoding $\ell$-weight errors coherently in superposition~\cite{Jordan2025optimization}.
Importantly: the degree of the polynomial $\ell$ is exactly the number of decodable errors.
The probability of observing $x\in\bbF_p^n$ from a computational basis measurement is $\calP(x)=P^2(f(x))$.

We next discuss the performance guarantees of DQI for max-LINSAT, as presented in Ref.~\cite{Jordan2025optimization}.
We denote by $\PDQI$ the distribution over symbol strings induced by DQI, and by $\sDQI=\bbE_{x\sim\PDQI}[s(x)]$ the \emph{expected} number of satisfied constraints.
With these, the main figure of merit is the expected \emph{fraction} of satisfied constraints $\sDQI/m$, which dictates the approximation ratio achieved by DQI.

Suppose a classical polynomial-time decoder that can correct up to $\ell$ symbol errors in codewords from the code $C^{\perp} = \{ c \in \bbF_p^m : B^\intercal c = 0\}$.
Then, in the limit $m,\ell\to\infty$ with fixed ratio $\ell/m\in(0,1)$, the expected fraction of satisfied constraints from measuring the output state of DQI is
\begin{align}\label{eq:dqi_score}
    \frac{\sDQI}{m} &=
    \left( \sqrt{\frac{\ell}{m}\left(1 - \frac{r}{p}\right)} + \sqrt{\frac{r}{p}\left(1 - \frac{\ell}{m}\right)} \right)^2,
\end{align}
if $r/p\leq 1-\ell/m$, and $\sDQI/m=1$ otherwise.
Here the polynomial $P$ of degree $\ell$ is chosen specifically to maximize this expectation value (see~Appendix~\ref{a:optimal_polynomial} for its explicit construction).
Notably, both $\sDQI$ and the optimal polynomial $P$ are independent of the choice of $F$, as we discuss in Appendix~\ref{a:simplified_analysis}.

This formula allows us to predict the expected performance of DQI for problems where we have exact decoders.
For the instance parameters we select for OPI (see Table~\ref{tab:OPI_instance}), the expected fraction of satisfied constraints is $>0.9$.
When only an imperfect decoder is available (one that fails on a fraction $\varepsilon$ of the errors) the performance guarantee in the case $p=2,r=1$ becomes
\begin{align}\label{eq:dqi_score_eps}
    \frac{\langle s\rangle_{\avg}}{m} &\geq \frac{1}{2}+\sqrt{\frac{\ell}{m}\left(1-\frac{\ell}{m}\right)} - \varepsilon.
\end{align}
This formula also holds in the limit of $\ell,m\to\infty$, with a fixed ratio $\ell/m\in(0,1)$, and the average is taken over uniformly random sets $F$.
For random sparse max-XORSAT, we can obtain the expected fraction of satisfied constraints by numerically estimating the fraction $\varepsilon$ of decodable errors for different weights $\ell$.
We use the notation $\langle s\rangle_{\avg}$ to distinguish this average-case guarantee from the worst-case guarantee in Eq.~\eqref{eq:dqi_score}, which holds for all choices of $F$.
The expression in Eq.~\eqref{eq:dqi_score_eps} presents a trade-off: increasing $\ell$ raises the term inside the square root, but it can also increase the decoder failure probability $\varepsilon$.
Table~\ref{tab:summary} summarizes the two problem families and their DQI performance guarantees.

\begin{table*}
    \centering
    \caption{
        \textbf{Special cases of max-LINSAT addressed in this work}.
        We relate the optimization problems to their dual error-correcting codes, as well as the proven performance guarantees using the 
        DQI algorithm (matching Ref.~\cite{Jordan2025optimization}).
    }\label{tab:summary}
    \begin{tabular}{c|c|c|c|c|c}\hline\hline
        Optimization & Error-correcting & Decoder & Fraction of correctly & Expected & Statement \\
        problem & code & & decoded errors & DQI score & type \\ \hline\hline
        OPI & Reed-Solomon & Berlekamp-Massey & $1$ & Eq.~\eqref{eq:dqi_score} & worst-case over $F$ \\\hline
        sparse max-XORSAT & LDPC & Belief propagation & $1-\varepsilon$ & Eq.~\eqref{eq:dqi_score_eps} & average-case over $F$ \\ \hline\hline
    \end{tabular}
\end{table*}

\subsection{Decision, search, and sampling}\label{ss:dec_search_samp}

It is worth noting that the majority of results in complexity theory concern \emph{decision} problems.
As we explain below, studying the complexity of DQI may require considering \emph{search} or \emph{sampling} problems instead; we briefly introduce these below.

Let us consider an abstract optimization problem specified by an objective function $f:\bbF_p^n\to\bbZ$.
One natural way to define a \emph{decision} problem would be, given a threshold $\alpha\in\bbZ$, decide between:
\begin{itemize}
    \item \textbf{YES}: if there exists $x\in\bbF_p^n$ such that $f(x)\geq\alpha$.
    \item \textbf{NO}: for all $x\in\bbF_p^n$, it holds $f(x) < \alpha$.
\end{itemize}
One natural way to define a \emph{search} problem would be, given a threshold $\alpha\in\bbZ$, find $x\in\bbF_p^n$ such that $f(x)\approx\alpha$ (assuming that such a symbol string exists).
Finally, one natural way to define a \emph{sampling} problem would be, given a threshold $\alpha\in\bbZ$ defining a subset of \enquote{good} symbol strings
\begin{align}
    S_\alpha=\{x\in\bbF_p^n\mid f(x)\approx\alpha\},
\end{align}
sample from the uniform distribution over $S_\alpha$.
As sketched in Fig.~\ref{fig:summary_problems}, these problems admit straightforward reductions: solving the sampling task immediately solves the search task, and solving the search task immediately solves the decision task.
We note that, for certain families of optimization tasks, the reverse directions may also hold.

\subsection{Related work on the complexity of max-LINSAT \& DQI}\label{ss:related}

Since max-LINSAT contains famously-hard optimization problems as special cases~\cite{hastad2001optimal}, we do not expect polynomial-time (classical or quantum) algorithms to perform better than random guessing in the worst case.
The main question is, rather: for which classes of problems does DQI perform better than known classical algorithms.
For those problems, we also ask whether the performance of DQI provably surpasses all possible polynomial-time classical algorithms.

The results in the foundational work~\cite{Jordan2025optimization} touch on all three versions of the max-LINSAT optimization problem.
By construction, DQI addresses the \emph{sampling} version of max-LINSAT, since the algorithm prepares a quantum state where different symbol strings achieving the same score have the same amplitude.
As we discussed, DQI then also addresses the \emph{search} version of max-LINSAT, by producing a single sample, with high probability.
Further, the performance guarantees of DQI~\cite{Jordan2025optimization} address the \emph{decision} version of max-LINSAT, for a carefully chosen threshold $\alpha$ depending on the parameters $\ell, m, r, p$ (see Eq.~\eqref{eq:dqi_score}).
The proofs involve the basis of symmetric polynomials.

The authors of Ref.~\cite{marwaha2025complexitydecodedquantuminterferometry} discuss further the complexity of \emph{sampling}, and they place approximate DQI sampling in $\mathsf{BPP^{NP}}$.
Their arguments hint that the DQI distribution is well-approximated by the uniform distribution over an \enquote{unstructured} subset $S\subseteq\bbF_p^n$ of size $\lvert S\rvert=p^{cn}$, with $c\in(0,1)$.
This means that $S$ is both exponentially large \emph{and} only an exponentially vanishing fraction of all symbol strings.
For this argument, the authors prove that the output state of DQI is not peaked, for which they use the basis of \emph{Kravchuk} polynomials, which generalizes symmetric polynomials.
Interestingly, the mechanisms that could make sampling hard in this case are rather different from previous cases, for instance those presented in Ref.~\cite{SupremacyReview} around the earlier claims for quantum advantage in sampling from random quantum circuits or schemes like boson sampling.
In particular, showing rigorously that quantum random sampling is hard (under mild assumptions) involves a reduction to a counting problem: it is widely believed that computing individual output probabilities for random quantum circuits is $\#\mathsf{P}$-hard.

In parallel, Refs.~\cite{chailloux2024softdecoders,chailloux2025opi,blanvillain2026quantum} proposed improving DQI by using quantum decoders, as opposed to coherent classical decoders.
These works then show a higher expected performance on average over instances, by reaching further than the threshold of perfect decoding.
Further, recent works have studied the complexity of restricted families of instances of max-LINSAT.
On the one hand, Refs.~\cite{kramer2026approximability,kramer2026tight} characterize the worst-case hardness of max-LINSAT both as a function of the ratio $r/p$ and the sparsity of $B$.
These results identify regimes where no polynomial-time algorithm is expected to perform better than random guessing, and thus rule out worst-case super-polynomial speed-ups using DQI.
On the other hand, Refs.~\cite{sun2026worst,horinaga2026worst} study instances of OPI where the worst-case performance of DQI is provably sub-optimal, namely where there exist solutions with scores beyond the reach of DQI.
First, Ref.~\cite{sun2026worst} proved the existence of such instances, then and recently Refs.~\cite{horinaga2026worst, jo2026efficient} improved on these results.
Specifically, the authors of Ref.~\cite{horinaga2026worst} construct a polynomial-time quantum algorithm for these beyond-DQI instances of OPI, and exact sampling quantum algorithms are provided in Ref.~\cite{jo2026efficient}.

\section{Markov chain Monte Carlo for max-LINSAT
\label{s:MCMC}}

\emph{Markov-chain Monte-Carlo} (MCMC) methods are a family of algorithms for sampling from complex probability distributions $\mu$ over $x \in \F_p^n$ when direct sampling is not possible, typically due to high dimensionality or because $\mu$ is only specified up to an unknown normalization constant \cite{liu2008,gilks1995markov,robert_monte_2004}. 
MCMC constructs a Markov chain whose stationary distribution is the target $\mu$, so that after a sufficiently long \emph{mixing time}, its samples approximate draws from $\mu$. 
A key advantage of these methods is that they require only local evaluations of unnormalized probabilities or energy differences, making them both flexible and widely applicable.
Among the many variants of MCMC, we focus here exclusively on Gibbs sampling.

\subsection{Gibbs sampling}\label{s:gibbs_sampling}

Gibbs sampling leverages conditional distributions to approximate the target distribution. 
A key advantage is that it avoids rejection steps, provided that sampling from the conditionals is computationally feasible.
The method iteratively selects a coordinate $j\in\{1,\ldots,n\}$ and resamples $x_j$ from the conditional distribution
\begin{align}
    \bbP[x_j = q \mid x_{/j}] &=
    \frac{\mu(x_j^q)}{\sum_{q'=0}^{p-1} \mu(x_j^{q'})},
\end{align}
where $x_{/j}$ denotes all components of $x\in \F_p^n$ except the $j$-th, and $x_j^q$ is $x$ with the $j$-th component replaced by $q\in \F_p$.
For the DQI target distribution $\mu(x) = \PDQI(x)$ with $\PDQI(x)=P^2(f(x))$ (as introduced in Section~\ref{ss:DQI}), this requires evaluating the polynomial $P(f(x))$ for $p$ different candidate values of $x_j$.
For efficient evaluation of the polynomial $P(f(x))$ as a function of $x$ in the MCMC context, see Appendix~\ref{a:efficient_probabilities}.

A natural generalization referred to as \emph{block Gibbs sampling} is to update a block of $\kappa$ variables simultaneously, sampling from the joint conditional distribution over the block.
This requires evaluating $\mathcal{O}(p^\kappa)$ conditional probabilities but can substantially improve convergence by moving the chain longer distances in a single step.
In all numerical experiments presented in this paper, we use a block size of $\kappa=3$, which provides a good balance between computational cost and convergence speed.
Gibbs sampling (and its block variant) produces a Markov chain that, under standard conditions such as irreducibility and aperiodicity, converges to the target distribution~\cite{robert_monte_2004}.

\subsection{Approximate simulation of the DQI state}\label{ss:simulation}

    We consider the possibility of simulating DQI with classical MCMC methods.
    To be precise, we do not try to simulate each step of the quantum circuit.
    Rather, we aim to reproduce the distribution induced by the DQI state.
    As discussed in Refs.~\cite{Jordan2025optimization, marwaha2025complexitydecodedquantuminterferometry}, the amplitudes of the DQI state are efficiently classically computable, and this allows us to use off-the-shelf MCMC methods.

    In studies of classical simulation of quantum computation, usual figures of merit involve measures of infidelity or distance in the space of distributions.
    These are for instance the metrics used in Ref.~\cite{marwaha2025complexitydecodedquantuminterferometry} to qualify the classical hardness of sampling from the DQI distribution.
    Instead, we turn our attention to the relevant figure of merit for the optimization task: namely, achieving a score comparable to that of DQI.

    This way, we flexibly navigate the boundary between simulation and optimization: we use an approximate \emph{simulation} algorithm and assess its performance according to the \emph{optimization} problem.
    We would deem an approximate simulation algorithm successful if it matched the performance of DQI in polynomial time (provably or heuristically).
    If such an approach were successful, it would challenge the claims of a super-polynomial speed-up by DQI~\cite{Jordan2025optimization}.
    Conversely, this hypothetical scenario would not directly contradict the complexity-theoretic results of Refs.~\cite{marwaha2025complexitydecodedquantuminterferometry} (a classical sampling algorithm could match the performance of DQI without being a good approximation for the entire distribution) nor~\cite{kramer2026approximability,kramer2026tight,sun2026worst,horinaga2026worst} (both DQI and the purported simulator would run in polynomial time).

    \begin{remark}[Equivalent number of qubits.]
        In max-XORSAT, the number of variables $n$ corresponds to the number of qubits that comprise the DQI state.
        In OPI, this is no longer the case.
        Rather, every element of $\bbF_p$ can be represented with $\lceil \log_2p\rceil$ qubits.
        Hence, for the problem defined over $\bbF_p^n$, the DQI state uses $n_p\coloneqq n\lceil \log_2p\rceil$ qubits.
        In this work we focus on the (equivalent) \emph{number of qubits} for both max-XORSAT and OPI, as this is the relevant scaling parameter for the runtime of the quantum algorithm~\cite{khattar2025dqi_opi}.
    \end{remark}

\subsection{Search versus sampling with Markov chain Monte Carlo}\label{ss:mcmc-optim-samp}

    Tackling the \emph{search} problem with MCMC methods follows a straightforward procedure: fix a score threshold $\alpha$, and set the following stopping criterion for the Markov chain: \enquote{$f(x_\tau)\geq\alpha$}.
    We refer to $\tau$ as the \emph{mixing time}: the number of steps required to reach the stopping criterion.
    Also, for ease of language, we refer to a sample $x$ as \enquote{a good sample} if it satisfies $f(x)\geq\alpha$.

    Regarding the \emph{sampling} problem, we first fix a concrete goal for the algorithm.
    Recall from Ref.~\cite{marwaha2025complexitydecodedquantuminterferometry} that the DQI distribution is well-approximated by the uniform distribution over good samples.
    We call $S$ the set of good samples, with $\lvert S\rvert=p^{cn}$, for $c<1$.
    Then, approximating the uniform distribution over $S$ is sufficient to approximate the DQI distribution, to some extent.
    The standard approach to testing uniformity \emph{from samples} is based on the birthday paradox: the collision probability of two independent samples is
    \begin{align}
    \bbP_{x,x'\sim\mathcal{P}}\!\left[x=x'\right] &= \frac{1}{\lvert S\rvert} = p^{-cn},
    \end{align}
    so one typically requires \(\Theta(\sqrt{\lvert S\rvert}) = \Theta(p^{cn/2})\) samples before observing a collision.
    We note that avoiding collisions is only a one-sided test, but to make the discussion concrete, we conceptualize the goal of sampling as: \enquote{find $N$ \emph{different} good samples.}
    We call $\tau_N$ the \emph{sampling time}: the total number of steps required to produce $N$ different good samples.

\subsection{Sampling algorithms with Markov chain Monte Carlo}\label{ss:sampling_mcmc}

    We identify two different possible procedures for the sampling task, which we refer to as \enquote{restart} and \enquote{keep-going}.
    In \emph{restart}, we run an independent Markov chain for each good sample.
    Assuming sequential computation, this means that every time we find a new sample satisfying the score condition, we simply discard the Markov chain and start with a fresh run.
    In \emph{keep-going}, we only stop the Markov chain once $N$ different good samples have been found, and we record every new good sample along the way.
    We briefly discuss the potential relative merits of the two sampling approaches.
    
    Independent MCMC runs are expected to produce independent outputs, and given the symmetry among the good samples (all samples achieving the same score receive the same probability), this should indicate that the \emph{restart} algorithm indeed produces uniform-random samples.
    It could be that, due to a complex score landscape, MCMC can only reach certain subsets of good samples, but as long as these are large enough, we would not be able to distinguish them from complete uniform random in polynomial time.
    
    Next, we consider the landscape of scores for all possible symbol strings.
    We can interpret MCMC as a probabilistic hill-climbing algorithm, which starting from a random point on the map, performs random local moves that are biased toward higher scores.
    Let us assume that the approach eventually succeeds and we reach a good sample, and let us say further we are in a local maximum.
    From here, the \emph{keep-going} algorithm ought to find other \emph{different} good samples.
    In the case that the good samples cluster in sizable connected components, yielding a ridge-like hilltop, we could expect the algorithm to perform a random walk over the hilltop, and hence avoid repeatedly coming back to the same good sample.
    Whether this clustering happens in practice is related to the so-called \emph{overlap gap property}, as discussed for instance in Ref.~\cite{anschuetz2025decoded}.
    In the general case, where no such clustering is assumed, to keep-going may involve some hurdles.
    For the algorithm to produce a new good sample, it first needs to climb down from the hilltop, reach a valley, and then climb up in a different direction.
    Yet, the random local moves of MCMC are biased towards climbing \emph{up}, not \emph{down}: the probability that the chain abandons a hilltop in favor of a valley is probabilistically suppressed.
    In practice, different problem instances may come with varying necessary times for random samples to abandon their respective hills, and we study these numerically.

    Call $\tau$ the \emph{mixing time} of the chain: the number of steps necessary to reach one good sample starting from scratch.
    Also, call $I$ the \emph{intra-sample time}: the number of steps necessary to go from one good sample to a different one (as we do in the \emph{keep-going} procedure).
    Then, the expected total number of steps for \emph{restart} is $N\tau$: the number of steps for each of them is exactly the mixing time.
    The expected total number of steps for \emph{keep-going} is $\tau+(N-1)I$: we pay the cost of mixing once to find the first good sample, and then we hop between different good samples $N-1$ times, at a cost of $I$ each, for a total of $N$ different good samples.
    Relating to the clustering of solutions, different problems will come with different mixing and intra-sample times, from which we can argue which of the two sampling algorithms ought to be better: \emph{restart} should be favored when $\tau<I$, and \emph{keep-going} otherwise.

\section{Analytical Results}\label{s:analytical}

\subsection{Decision: moments of max-LINSAT \& DQI}\label{ss:moments}

We briefly present a simplified analysis on the expected performance of DQI.
In particular, a key feature in the original proofs in Ref.~\cite{Jordan2025optimization} was working on the basis of elementary symmetric polynomials, combined with several standard Fourier computations.
We provide a novel derivation which avoids symmetric polynomials, thus also complementing the approaches via Kravchuk polynomials in Refs.~\cite{marwaha2025complexitydecodedquantuminterferometry, sun2026worst,horinaga2026worst}.
We provide full details in Appendix~\ref{a:simplified_analysis}.

For ease of notation, we write $\PDQI$ as a polynomial of degree $2\ell$:
\begin{align}
    \PDQI(x) &\coloneqq P(f(x))^2 = \sum_{k'=0}^{2\ell}\beta_{k'} f^{k'}(x).
\end{align}
We use the fact that $P$ is a polynomial of degree $\ell$, and the coefficients $\beta_{k'}$ are the appropriate sums over products of the original coefficients $\alpha_k$.

\begin{lemma}[DQI vs uniform distribution -- informal]\label{l:moments_dqi_unif}
    Suppose we use DQI with a polynomial of degree $\ell$.
    Then, the $k^\text{th}$ moment of the objective function $f(x)$ under the DQI distribution is a linear combination of moments under the uniform distribution as
    \begin{align}
        \bbE_{x\sim \PDQI}\left[f^k(x)\right] &= p^n \sum_{k'=0}^{2\ell}\beta_{k'}\bbE_{x\sim\operatorname{Unif}}\left[f^{k+k'}(x)\right].
    \end{align}
\end{lemma}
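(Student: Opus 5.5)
This follows directly from the definitions, since $\PDQI(x)=P(f(x))^2$ is itself a polynomial in $f(x)$. I will write the DQI expectation as an explicit sum over all $p^n$ strings and then re-interpret that sum as a uniform average.

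\textbf{Step 1: expand the DQI expectation.} By definition of the DQI distribution (Section~\ref{ss:DQI}), the probability of outcome $x$ is $\PDQI(x)=P^2(f(x))$, normalized by the choice of the $\alpha_k$. Hence
\begin{align}
    \bbE_{x\sim\PDQI}\left[f^k(x)\right] = \sum_{x\in\bbF_p^n} \PDQI(x) f^k(x) = \sum_{x\in\bbF_p^n}\sum_{k'=0}^{2\ell}\beta_{k'} f^{k+k'}(x).
\end{align}
Here I substitute $P(f)^2=\sum_{k'}\beta_{k'}f^{k'}$, where $\beta_{k'}=\sum_{a+b=k'}\alpha_a\alpha_b$. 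The degree of $P^2$ is at most $2\ell$, so this expansion is exact.

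\textbf{Step 2: exchange the sums and normalize.} The sums are finite, so I can swap them. The uniform distribution on $\bbF_p^n$ gives weight $p^{-n}$ to each string, so $\sum_{x} g(x) = p^n\,\bbE_{x\sim\operatorname{Unif}}[g(x)]$ for any $g$. Applying this with $g=f^{k+k'}$ for each $k'$ gives
\begin{align}
    \bbE_{x\sim \PDQI}\left[f^k(x)\right] = p^n\sum_{k'=0}^{2\ell}\beta_{k'}\,\bbE_{x\sim\operatorname{Unif}}\left[f^{k+k'}(x)\right],
\end{align}
which is the claimed identity.

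\textbf{Main obstacle.} The only point needing care is normalization. The statement treats $\PDQI$ as the squared amplitude without an extra normalizing constant, so the $\alpha_k$ must already enforce $\sum_x P^2(f(x))=1$. Setting $k=0$ gives the consistency check $p^n\sum_{k'}\beta_{k'}\bbE_{\operatorname{Unif}}[f^{k'}]=1$. If instead the amplitudes were defined only up to normalization, I would divide by this $k=0$ quantity, and the lemma would hold with the $\beta_{k'}$ rescaled accordingly.

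The substantive content, which belongs to later results rather than to this lemma, is computing the uniform moments $\bbE_{\operatorname{Unif}}[f^{j}]$. I would do this by writing $f=\sum_i f_i$ and expanding in Fourier characters. For $j\le 2\ell<d$, the cross terms involve dual-code words of weight below the distance, so these moments reduce to those of a sum of $m$ independent $\pm1$ variables with bias $2r/p-1$. That is where the binomial statistics mentioned in the abstract would enter.
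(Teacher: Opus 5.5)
Your proof is correct and follows the paper's argument: expand $\PDQI(x)=\sum_{k'=0}^{2\ell}\beta_{k'}f^{k'}(x)$, swap the finite sums, and apply $\sum_{x}g(x)=p^n\,\bbE_{x\sim\operatorname{Unif}}[g(x)]$ to each $g=f^{k+k'}$. Your normalization check at $k=0$ is consistent with the paper's remark that the $\beta_{k'}$ scale as $1/p^n$ to cancel the $p^n$ prefactor.
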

\begin{proof}[Proof sketch]
    Full details in Appendix~\ref{a:simplified_analysis}.
    The main idea is that $\PDQI(x)f^k(x)$ is itself a polynomial of degree $2\ell+k$.
    Then, linearity allows us to decompose $\bbE_{x\sim\PDQI}[f^k(x)]$ into monomial terms $\bbE_{x\sim\operatorname{Unif}}[f^{k+k'}(x)]$, for each $k'\in\{0,\ldots,2\ell\}$.
\end{proof}

    \begin{theorem}[Moments of max-LINSAT -- informal]\label{thm:distance_uniform_distribution_moments_maxLINSAT}
        Suppose a max-LINSAT instance whose dual code has distance $d$.
        Let $J\sim\operatorname{Binom}(m, r/p)$ be a binomial-distributed random variable.
        Then, for all $k<d$, the $k^\text{th}$ moment of the objective function $f(x)$ under the uniform distribution fulfills
        \begin{align}
            \bbE_{x\sim\operatorname{Unif}}[f^k(x)] &= \bbE_{J}\left[(2J-m)^k\right].
        \end{align}
    \end{theorem}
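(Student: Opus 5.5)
The plan is to show that, under uniform $x$, any $k<d$ of the constraint values $y_i=\langle b_i,x\rangle$ are jointly uniform and independent. If so, the $k$-th moment of $f$ coincides with that of a sum of $m$ i.i.d.\ $\pm1$ variables, each equal to $+1$ with probability $r/p$.

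\textbf{Step 1: expand the power.} Write $f(x)=\sum_{i=1}^m f_i(y_i)$ with $y_i=\langle b_i,x\rangle$. Expanding,
\begin{align}
    f^k(x)=\sum_{(i_1,\ldots,i_k)\in[m]^k} \prod_{t=1}^k f_{i_t}(y_{i_t}).
\end{align}
Each term depends on at most $k$ distinct indices, forming a set $T\subseteq[m]$ with $\lvert T\rvert\le k<d$. The same expansion holds for $g=\sum_i g_i$, where the $g_i$ are i.i.d.\ $\pm1$ variables with $\bbP[g_i=1]=r/p$. Note that $g=2J-m$ with $J\sim\operatorname{Binom}(m,r/p)$. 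So it suffices to show that for every $T$ with $\lvert T\rvert<d$, the vector $(y_i)_{i\in T}$ is uniform on $\bbF_p^{\lvert T\rvert}$ when $x$ is uniform. Then $(f_i(y_i))_{i\in T}$ has the same joint law as $(g_i)_{i\in T}$, because $\lvert F_i\rvert=r$. Summing term by term and using linearity of expectation gives the theorem.

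\textbf{Step 2: uniformity of the restricted map.} The map $x\mapsto (y_i)_{i\in T}=B_T x$ is linear, where $B_T$ consists of the rows of $B$ indexed by $T$. The image of a uniform vector under a linear map is uniform on the image subspace. It is therefore uniform on all of $\bbF_p^{\lvert T\rvert}$ exactly when $B_T$ has full row rank, i.e.\ when the rows $\{b_i\}_{i\in T}$ are linearly independent.

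Suppose instead that $\sum_{i\in T} c_i b_i=0$ with some $c_i\neq0$. Extending by zeros outside $T$ gives $c\in\bbF_p^m$ with $B^\intercal c=0$, so $c\in C^\perp$. This $c$ is a nonzero codeword of Hamming weight at most $\lvert T\rvert<d$. For a linear code the minimum distance equals the minimum nonzero weight (taking $c'=0$ in the definition), so this is a contradiction. Hence $B_T$ has full row rank.

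An equivalent Fourier route also works. For any $\lambda\in\bbF_p^T\setminus\{0\}$, we have $\bbE_x[\omega^{\sum_{i\in T}\lambda_i y_i}]=\bbE_x[\omega^{\langle B^\intercal\lambda, x\rangle}]=0$, since $B^\intercal\lambda\neq0$ by the same distance argument. Vanishing of all nontrivial characters gives uniformity.

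\textbf{Main obstacle.} The only real content is Step 2: the translation between "weight $<d$ dual codewords do not exist" and "any $<d$ rows are independent." The rest is bookkeeping. I need to handle repeated indices $i_t$ correctly, which is fine because only the joint law over the distinct set $T$ matters. I also need to note that the identity is exact, not asymptotic, and holds for every choice of $F$, since only $\lvert F_i\rvert=r$ enters.
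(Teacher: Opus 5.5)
Your proof is correct, but it takes a different route from the paper.

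\textbf{The paper's route.} The paper works entirely in Fourier space. It writes each $f_i$ through its transform $\tilde f_i$ and expands $f^k$ over multi-indices $i\in[m]^k$ and frequencies $y\in\mathbb{F}_p^k$. Summing over $x$ produces the Kronecker delta $\delta_{B^\intercal(\sum_j y_j e_{i_j}),0}$. It then uses $k<d$ to replace this by $\delta_{\sum_j y_j e_{i_j},0}$, which is the character-sum form of your Step 2. Next it evaluates the surviving terms by an explicit Fourier convolution, obtaining $\sum_{n_1+\cdots+n_m=k}\binom{k}{n_1,\ldots,n_m}\left(\frac{2r-p}{p}\right)^{\#\{j:\,n_j\text{ odd}\}}$. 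Finally, a subset expansion and the multinomial theorem convert this into $\mathbb{E}_J[(2J-m)^k]$.

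\textbf{Your route.} You replace that bookkeeping with one structural fact: dual distance $d$ makes the constraint values $\langle b_i,x\rangle$ uniform and $(d-1)$-wise independent. A coupling with i.i.d. biased signs then finishes the proof. The paper's convolution step and closing identity are, in effect, hand computations of $\prod_j\mathbb{E}[g_j^{n_j}]$ and of $\mathbb{E}[(\sum_i g_i)^k]$ for your independent model.

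\textbf{What your route buys.} It is shorter, and it never uses that the $f_i$ are $\pm1$-valued or that all $\lvert F_i\rvert$ are equal. So it covers unequal set sizes at no extra cost (giving a Poisson-binomial law in place of $\operatorname{Binom}(m,r/p)$). It also covers arbitrary functions of the individual constraint values, and any statistic that involves fewer than $d$ constraints at a time. It makes the paper's remark on independence from $F$ immediate.

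\textbf{What the paper's route buys.} It keeps the argument in the same Fourier and syndrome language ($\tilde f_i$, constraints $B^\intercal y=0$) in which DQI itself is built. It also gives an explicit handle on $k\geq d$: the discarded terms are exactly sums over nonzero codewords of weight at most $k$. That is the natural starting point for the approximate, beyond-distance version the paper alludes to.
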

    \begin{proof}[Proof sketch]
        Full details in Appendix~\ref{a:simplified_analysis}.
        We first expand $f^k(x)$ in terms of the individual summands that define the objective function $f(x)=\sum_{i=1}^m f_i(\langle b_i,x\rangle)$.
        We interpret the corresponding result as a sum over length-$k$ codewords from the dual code.
        Using the definition of distance and $k<d$, we simplify the sum to consider only the different ways in which the all-$0$ codeword can be obtained.
        We finally manipulate the resulting combinatorial sum as the expectation value of a binomial-distributed random variable.
    \end{proof}

    \begin{theorem}[Moments of the DQI distribution -- informal]\label{thm:DQI_performance_simplified}
        Suppose a max-LINSAT instance whose dual code has distance $d$.
        Suppose we use DQI with a polynomial of degree $\ell$, with $2\ell+1\leq d$.
        Let $J\sim\operatorname{Binom}(m,r/p)$ be a binomial-distributed random variable.
        Then, for any $k < d - 2\ell$, the $k^\text{th}$ moment  of the objective function $f$ under the DQI distribution fulfills
\begin{align}
            \bbE_{x\sim \PDQI}\left[f(x)^k\right] &= p^n  \bbE_{J}\left[\sum_{k'=0}^{2\ell}\beta_{k'}(2J -m)^{k+k'}\right].
        \end{align}
    \end{theorem}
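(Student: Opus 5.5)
The plan is to combine Lemma~\ref{l:moments_dqi_unif} with Theorem~\ref{thm:distance_uniform_distribution_moments_maxLINSAT}, applying the latter term by term. First, Lemma~\ref{l:moments_dqi_unif} gives
\begin{align}
    \bbE_{x\sim \PDQI}\left[f^k(x)\right] &= p^n \sum_{k'=0}^{2\ell}\beta_{k'}\,\bbE_{x\sim\operatorname{Unif}}\left[f^{k+k'}(x)\right].
\end{align}
To justify this I would write $\bbE_{x\sim\PDQI}[f^k(x)]=\sum_{x}P^2(f(x))f^k(x)$ and expand $P^2=\sum_{k'}\beta_{k'}f^{k'}$. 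The sum over $x\in\bbF_p^n$ equals $p^n$ times the uniform average, and this is where the factor $p^n$ comes from. Note that normalization of $\DQI$ is already built into the $\beta_{k'}$, so nothing else needs to be rescaled.

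Next, I would check that every moment that appears lies in the range where Theorem~\ref{thm:distance_uniform_distribution_moments_maxLINSAT} applies. The exponents are $k+k'$ with $0\le k'\le 2\ell$, so the largest one is $k+2\ell$. The hypothesis $k<d-2\ell$ gives exactly $k+2\ell<d$. Hence every $k+k'<d$, and for each term
\begin{align}
    \bbE_{x\sim\operatorname{Unif}}\left[f^{k+k'}(x)\right]=\bbE_J\left[(2J-m)^{k+k'}\right].
\end{align}
The condition $2\ell+1\le d$ guarantees that the admissible range of $k$ is nonempty: $k=0$ is allowed, and it reproduces the normalization identity. It also matches the decoding requirement for DQI, but beyond that it plays no further role in the argument.

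Finally, I would substitute these identities and use linearity of $\bbE_J$ to pull the finite sum over $k'$ inside the expectation, which gives the claimed formula. There is no real obstacle once the two earlier results are available. The only delicate point is the bookkeeping of the degree: one must make sure that the total degree $k+2\ell$ of $\PDQI\cdot f^k$ stays strictly below $d$. If it does not, the codeword-counting argument behind Theorem~\ref{thm:distance_uniform_distribution_moments_maxLINSAT} would pick up contributions from nonzero dual codewords of weight at most $k+2\ell$, and the binomial identity would fail.
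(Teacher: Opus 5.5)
Your proposal is correct and matches the paper's proof: apply Lemma~\ref{l:moments_dqi_unif}, substitute Theorem~\ref{thm:distance_uniform_distribution_moments_maxLINSAT} into each term, and pull the finite sum inside $\bbE_J$ by linearity. Your explicit check that $k+k'\le k+2\ell<d$ for every term is left implicit in the paper, but it is exactly the condition that makes the term-by-term substitution valid.
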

    \begin{proof}[Proof sketch]
        Full details in Appendix~\ref{a:simplified_analysis}.
        The statement follows directly from Lemma~\ref{l:moments_dqi_unif} and Theorem~\ref{thm:distance_uniform_distribution_moments_maxLINSAT}.
    \end{proof}

    We note that Theorem~\ref{thm:DQI_performance_simplified} is a direct statement about the DQI distribution.
    This result draws its value from its formulation in the basis of monomials (as opposed to a representation in terms of symmetric polynomials).
    In addition, we believe that the proof is simpler than the previously existing ones from Refs.~\cite{Jordan2025optimization, sun2026worst,horinaga2026worst}.
    
    Conversely, we highlight Theorem~\ref{thm:distance_uniform_distribution_moments_maxLINSAT} as a statement about the complexity of max-LINSAT, independently of which optimizer we use.
    In particular, we identify the requirement \enquote{the dual code has distance $d$} as a concrete, quantifiable notion of \emph{structure} (aligned with results in Refs.~\cite{sun2026worst,horinaga2026worst}).
    Our result builds the intuition that the higher the distance, the closer the distribution of solutions is to being binomial-distributed, as more of their moments match.
    As discussed above, DQI requires the dual code to have high distance to perform better than random guessing, and hence the circle is closed: the performance of DQI is better than random guessing if the distribution of solutions is closer to a binomial.
    If the distribution of solutions is close to a binomial, though, then the decision version of the problem may become easier, especially for thresholds $\alpha$ within the bulk.
    One might have hoped that DQI would bring new insight into the study of $\mathsf{NP}$-hardness beyond certain approximation-ratio thresholds.
    Instead, our results indicate that the performance of DQI falls far below the \enquote{interesting} regime for complexity theory, namely the threshold for $\mathsf{NP}$-hardness.

\subsection{Obstructions for the complexity of DQI}\label{ss:obstructions}

Intuitively, technical subtleties arise when characterizing the complexity of DQI.
Precisely because DQI is a polynomial-time quantum algorithm that achieves $\sDQI$ for certain families of max-LINSAT, these problems cannot be arbitrarily hard, as standard belief in complexity theory maintains that $\mathsf{NP}\not\subseteq\mathsf{BQP}$.
This means that the complexity of the decision problem cannot resort to standard reductions to $\mathsf{NP}$-complete problems.
Further, the performance guarantees of DQI hold under the assumption that a polynomial-time classical decoder can decode a large fraction of errors up to weight $\ell$.
This is a property of the matrix $B$, and it introduces a restriction to the class of problems which should naturally be considered.

We next discuss the potential classical hardness of sampling from the DQI distribution in a way that is different from and strongly complementing the insights of Ref.~\cite{marwaha2025complexitydecodedquantuminterferometry}. 
Recall that an argument for the classical hardness of sampling from the DQI distribution was that the DQI distribution is well-approximated by the uniform distribution over an \emph{unstructured} subset of size $p^{cn}$, with $c<1$.
These ingredients, assuming the set $S$ were truly unstructured, would be enough to prove that sampling is hard.
Yet, $S$ is not unstructured in practice, as we have a succinct description for it: it is the set of all symbol strings $x$ whose number of satisfied constraints is close to $\sDQI$.
We consider the possibility of showing that sampling is hard without the assumption that $S$ is unstructured.
In Appendix~\ref{aa:counting} we propose a natural counting problem based on max-LINSAT, and we discuss a possible obstruction in showing that DQI can offer an advantage for approximate counting, which in turn would represent an obstruction in proving the classical hardness of sampling from $\PDQI$.
We highlight that the power of having access to samples from the DQI distribution can be matched by analytical arguments exploiting Theorem~\ref{thm:distance_uniform_distribution_moments_maxLINSAT}.

Regarding the search problem, we identify an obstruction against a standard reduction from search to decision (see Appendix~\ref{aa:decision} for further details).
To the best of our knowledge, this obstruction has not been spelled-out before.
Assume access to a machine capable of solving the decision problem for a given threshold $\alpha$.
A standard reduction from search to decision works iteratively, where at every step we query the decision machine while substituting additional entries of a growing \emph{guess} symbol string.
If the machine allows these modified queries for the same threshold $\alpha$, then a linear number of queries suffices to solve the search problem.
In the case of DQI, the performance guarantees (see Eq.~\eqref{eq:dqi_score}) could play the role of a (one-sided) decision machine.
We argue that this particular instantiation of a decision machine may be unable to address the necessary modified queries in the process we describe. 

\section{Numerical Results}\label{s:results}

\begin{figure*}[t]
    \centering
    \includegraphics{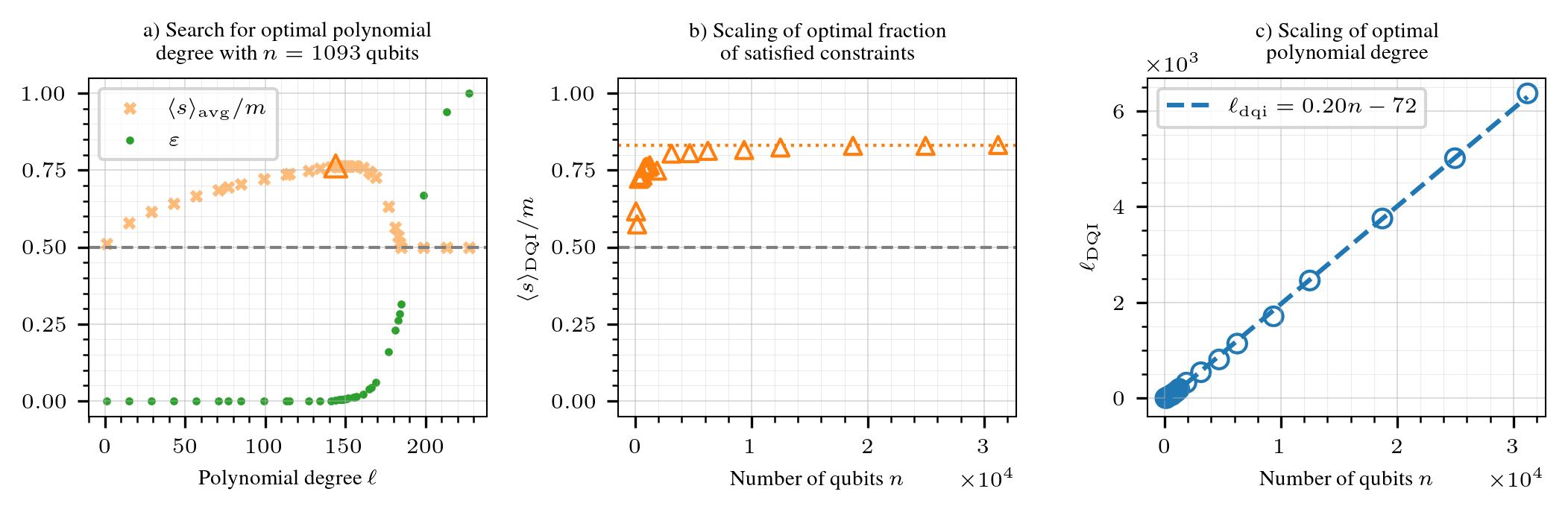}
    \caption{
        \textbf{Scaling behavior of DQI for max-XORSAT.}
        a) For each instance size, we search for the polynomial degree $\ell$ that maximizes the expected fraction of satisfied constraints $\langle s\rangle_\mathrm{avg}/m \geq 1/2 + \sqrt{\ell/m(1-\ell/m)}-\varepsilon$, where we estimate the fraction of incorrect decoding $\varepsilon$ numerically.
        The horizontal dashed line represents the performance floor that random guessing would achieve.
        The triangle highlights the optimal performance for this instance size.
        We observe that the fraction of incorrect decoding has a step-like behavior, which is consistent over different instance sizes (refer to Appendix~\ref{a:further_details} for further visualizations).
        b) As the instance size increases, the optimal fraction of satisfied constraints $\langle s\rangle_\mathrm{DQI}/m$ converges to $0.831$, as has been reported in Ref.~\cite{Jordan2025optimization}, and highlighted by the horizontal dotted line.
        c) The optimal polynomial degree $\ell_\mathrm{DQI}$ is well-approximated by a linear function of the number of qubits $n$ (or, equivalently, the number of constraints, $m\approx 2n$).
        These are the values we use for our numerical results with Markov chain Monte Carlo methods (see Section~\ref{s:results}), both for the polynomial degree $\ell_\mathrm{DQI}$ and performance threshold $\langle s\rangle_\mathrm{DQI}/m$.
    }
    \label{fig:ell_dqi_max-XORSAT}
\end{figure*}
    In this section, we present our comprehensive numerical results.
    We apply block Gibbs sampling with block size $\kappa=3$ to two special cases of max-LINSAT: max-XORSAT and \emph{optimal polynomial intersection} 
    (OPI). 
    We first specify the precise metrics and parameters for the experiment, and then discuss the runtime of our MCMC methods for the \emph{search} and \emph{sampling} versions of both \emph{max-XORSAT} and \emph{OPI}, as introduced in Sec.~\ref{s:preliminaries}.
    Recall that, to simplify the comparison and scaling statements, for OPI we report the equivalent number of qubits $n_p=n\lceil \log_2 p\rceil$.
    For our largest instances ($p=53,n=26$), this is $156$ qubits.

\subsection{Performance metrics}\label{ss:metrics}

    The following definitions follow the discussion in Sections~\ref{ss:DQI},~\ref{ss:dec_search_samp}, and~\ref{ss:mcmc-optim-samp}.
    Depending on the specifics of the problem, DQI succeeds either in the worst- or average-case accordingly to whether the decoder succeeds deterministically or probabilistically for all errors of weight up to $\ell$.
    Naturally, when pitting classical algorithms against the predicted DQI performance, the metrics used to judge the classical algorithms should reflect these facts.
    
\paragraph{Search.}
    For each problem family, we define a \emph{mixing time} $\tau$ as the number of Gibbs steps required for the chain to reach the DQI performance threshold $\langle s\rangle_{\mathrm{DQI}}$.
    Because the DQI guarantee for max-XORSAT holds on average over random choices of right-hand-side vector $v\in\mathbb{F}_2^m$, we define the \emph{average mixing time} $\tau_{\mathrm{avg}}$  as the average over $K=100$ random choices of $v$ of the first time the chain's score exceeds $\langle s\rangle_{\mathrm{DQI}}$.
    For OPI, where the guarantee holds in the worst case over all $F$, we define the \emph{maximum mixing time} $\tau_{\max}$ as the maximum over $K=100$ random choices of $F$.
    In both cases, we run $K$ chains independently and track the best score found up to each time step.
    For each chain $k$ we denote by $c_{k,t}$ the maximum score achieved up to step $t$.
    The mixing time is then the smallest $t$ such that the appropriate statistic (mean or min) of $\{c_{k,t}\}$ first exceeds $\langle s\rangle_{\mathrm{DQI}}$.
    
\paragraph{Sampling.}
    We define the \emph{sampling time} $\tau_N$ as the total number of Gibbs steps required to obtain $N=10$ different samples, all exceeding the DQI performance threshold $\sDQI$.
    For max-XORSAT, we report the average sampling time over $K=100$ random choices of $v$.
    For OPI, we average over $K=10$ random choices of $F$.

\subsection{Polynomial degree for max-XORSAT}\label{ss:poly_deg_max-XORSAT}

In Section~\ref{ss:DQI} we briefly mentioned the ensemble over which we draw the coefficient matrices for max-XORSAT.
Since exact deterministic formulas for the distance of these random codes are unavailable, we resort to numerical estimation.
For each instance size, we search for $\ell$ maximizing Eq.~\eqref{eq:dqi_score_eps}, thus striking a balance between the term in the square root and the fraction of incorrect decoding $\varepsilon$.
We find numerically that, for these instances, belief-propagation decoding achieves a decoder failure rate of $\varepsilon \approx 0$ for $\ell/m$ below a threshold of approximately $0.13$, at which point $\varepsilon$ jumps sharply (see the step-like behavior visible in Fig.~\ref{fig:ell_dqi_max-XORSAT}a, and Fig.~\ref{fig:ell_dqi_max-XORSAT_scaling} for further details).  
The optimal polynomial degree $\ell_{\mathrm{DQI}}$ lies just below this threshold, yielding $\langle s\rangle_{\mathrm{DQI}}/m \approx 0.831$ in the large-$m$ limit. 
For the numerical experiments, we determine $\ell_{\mathrm{DQI}}$ and $\langle s\rangle_{\mathrm{DQI}}/m$ empirically for each instance size by the procedure illustrated in Fig.~\ref{fig:ell_dqi_max-XORSAT}, and use these values as targets for the Gibbs chain.

\subsection{Mixing time and sampling time}

\begin{figure*}[t]
    \centering
    \includegraphics{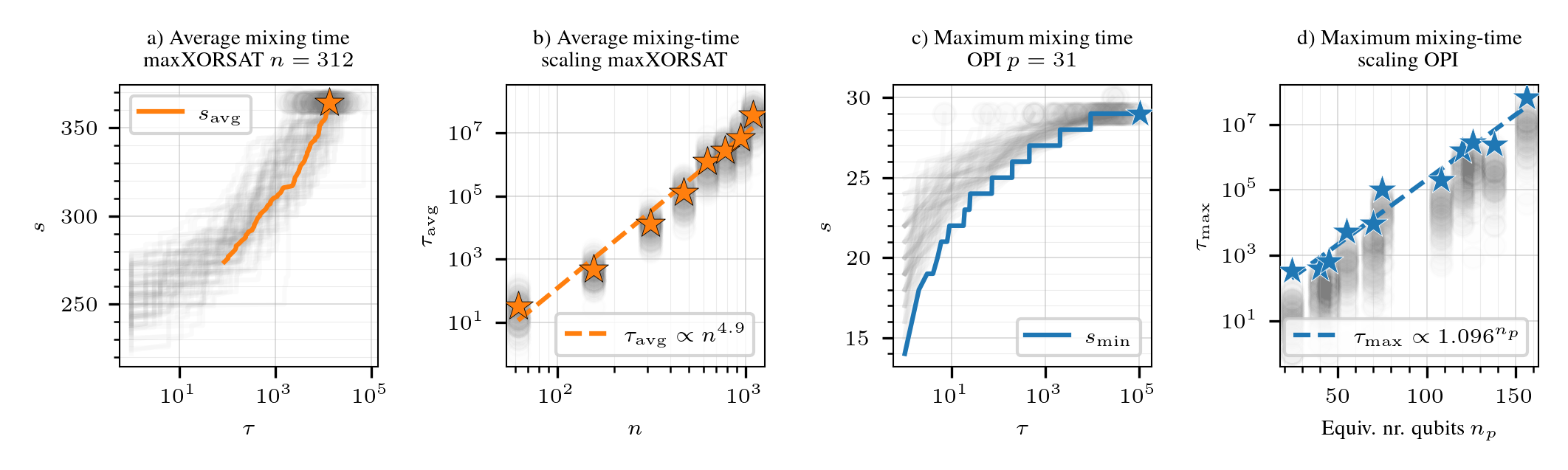}
    \caption{
        \textbf{Scaling behavior of Markov-chain mixing times.}
        a) The performance guarantees of DQI for max-XORSAT only hold on average over the choice of right-hand-side vector $v\in\bbF_2^m$.
        We sample $100$ uniform random vectors and monitor the average number of satisfied constraints $s_\mathrm{avg}$.
        We stop the chains once $s_\mathrm{avg}\geq\langle s\rangle_\mathrm{DQI}$, and we refer to the required number of steps as the average mixing time $\tau_\mathrm{avg}$ (see Fig.~\ref{fig:tau_search_scaling}a for further details).
        b) As the problem size increases, the average mixing time $\tau_\mathrm{avg}$ is well-approximated by a power law of degree $5$, which corresponds to a linear fit on a log-log plot.
        c) The performance guarantees of DQI for OPI hold independently over the choice of right-hand-side sets $F=\{F_i\}_{i=1}^m$, with $F_i\subseteq\bbF_p$, and $\lvert F_i\rvert=r\approx p/2$.
        We take $100$ uniform random sets and monitor the worst number of satisfied constraints $s_\mathrm{min}$.
        We stop the chains once $s_\mathrm{min}\geq\langle s\rangle_\mathrm{DQI}$, and we refer to the required number of steps as the maximum mixing time $\tau_\mathrm{max}$ (see Fig.~\ref{fig:tau_search_scaling}b for further details).
        d) As the problem size increases, the mixing time $\tau_\mathrm{max}$ is well-approximated by an exponential $1.1^{n_p}$, which corresponds to a linear fit on a semi-log plot (we ignore the first point, as an outlier).
        Here $n_p$ is the equivalent number of qubits corresponding to OPI for prime $p$.
        The largest OPI instance we consider corresponds to $p=53$, with $n_p=156$.
        We observe that the qualitative behavior of the average mixing time for OPI would likely not differ much from that of $\tau_\mathrm{max}$.
        Refer to Table~\ref{tab:fits} for the precise numerical values of all fits.
    }
    \label{fig:tau_search}
\end{figure*}

\paragraph{Search.}
For max-XORSAT, the results in Fig.~\ref{fig:tau_search}a--b show that the average mixing time $\tau_{\mathrm{avg}}$ scales polynomially in $n$ (as approximately $n^5$), demonstrating that block Gibbs sampling can efficiently solve the max-XORSAT optimization problem at the DQI performance level.
For OPI, the results in Fig.~\ref{fig:tau_search}c--d show a striking contrast with max-XORSAT: the maximum mixing time $\tau_{\mathrm{max}}$ scales \emph{exponentially} in $n_p$, growing approximately as $1.1^{n_p}$.  
We observe that the average mixing time displays a qualitatively-similar behavior.
Further details on the search experiments can be found in Fig.~\ref{fig:tau_search_scaling}.

\begin{figure*}[t]
    \centering
    \includegraphics{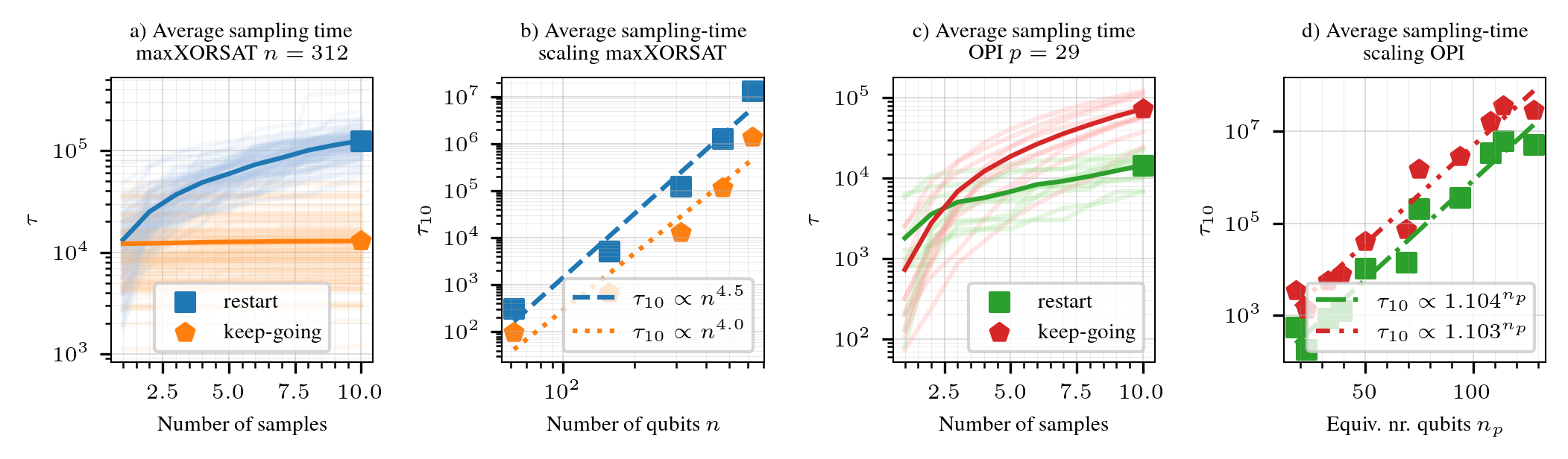}
    \caption{
        \textbf{Scaling behavior of Markov-chain sampling times.}
        a) For max-XORSAT, we sample $100$ uniform random vectors and employ both sampling algorithms described in Section~\ref{ss:mcmc-optim-samp}.
        We report the average number of steps required to produce up to $N=10$ different samples (see Fig.~\ref{fig:tau_sampling_scaling_max-XORSAT} for further details).
        b) As the problem size increases, the average sampling time $\tau_{10}$ is well-approximated by a power law of degree $4$ or $4.5$.
        c) For OPI, we sample $10$ uniform random sets and employ both sampling algorithms.
        We report the average number of steps required to produce up to $N=10$ different samples (see Fig.~\ref{fig:tau_sampling_scaling_OPI} for further details).
        d) As the problem size increases, the average sampling time $\tau_{10}$ is well-approximated by an exponential $1.1^{n_p}$.
        Refer to Table~\ref{tab:fits} for the precise numerical values of all fits.
    }
    \label{fig:tau_sampling}
\end{figure*}

\paragraph{Sampling.}
For max-XORSAT, Figure~\ref{fig:tau_sampling}a--b shows that the average sampling time $\tau_{10}$ for both sampling algorithms has the same qualitative scaling as the average mixing time, namely it is well-approximated by a power law.
We observe a factor $\approx10$ multiplicative advantage for keep-going over restart.
Further details on the sampling experiments for max-XORSAT can be found in Fig.~\ref{fig:tau_sampling_scaling_max-XORSAT}.
For OPI, Fig.~\ref{fig:tau_sampling}c--d are similar to the case of max-XORSAT: both sampling algorithms display the same qualitative asymptotic scaling in $n_p$ as the maximum mixing time for search, yet this time \emph{keep going} is \emph{slower} than \emph{restart} by a factor $\approx10$.
Further details on the sampling experiments for OPI can be found in Fig.~\ref{fig:tau_sampling_scaling_OPI}.

\paragraph{Lower thresholds for OPI.}

\begin{figure*}[t]
    \centering
    \includegraphics{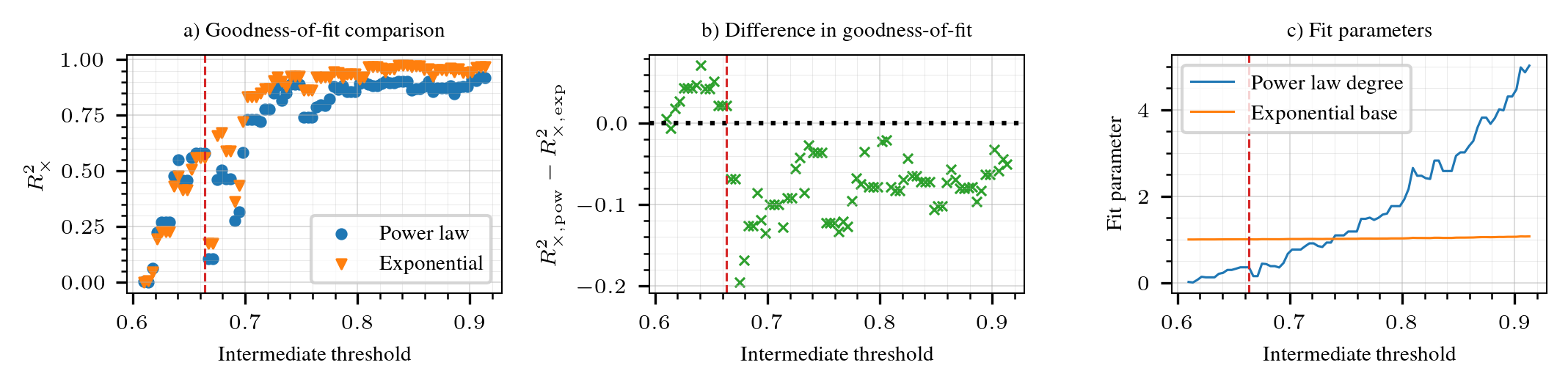}
    \caption{
        \textbf{Comparison of goodness-of-fit for lower score thresholds.}
        The performance guarantees of all known polynomial-time \emph{classical} algorithms yield thresholds ranging between $0.55+o(1)$ and $0.75$~\cite{Jordan2025optimization}, whereas DQI for these parameter regimes of OPI yields a threshold exceeding $0.9$.
        a) Goodness-of-fit $R^2_\times$ for different score thresholds, for both a power law (blue dots) and an exponential fit (orange triangles).
        We observe a similar behavior for both fits in larger thresholds, and no clear pattern around the phase transition (red dashed line).
        Low values of $R^2_\times$, say $R^2_\times<0.8$ are indicative of a poor fit.
        b) Difference in goodness-of-fits, where the dashed red line captures the phase transition, from a positive difference (signaling that a power law is preferable) to a negative one (exponential is a better fit).
        Together with panel a), we identify the phase transition to be in the regime where both fits achieve low $R^2_\times$.
        c) Values of the fit parameters for different thresholds.
        The degree of the power-law fit around the phase transition is below $1$, which suggests that this threshold is within reach of random guessing with several repetitions.
        Beyond the phase transition, we observe that the degree of the power law increases steadily with the value of the threshold, roughly indicating an unbounded degree, and hence an exponential underlying behavior.
        The base of the exponential remains close to $1$ throughout.
    }
    \label{fig:comparison_thresholds}
\end{figure*}

    One additional difference between max-XORSAT and OPI is the numerical value of the threshold $\sDQI/m$.
    The DQI performance threshold for OPI is $\gtrsim0.9$, approximately $0.1$ higher than the corresponding threshold for max-XORSAT $(\gtrsim0.8)$.
    In both cases, we expect random guessing to achieve an approximation ratio of $0.5$, and slightly higher for certain variants of simulated annealing or the so-called Prange algorithm~\cite{Jordan2025optimization,khattar2025dqi_opi}.
    We investigate the effect of setting a lower threshold for OPI on the empirical runtime scaling of our MCMC approaches.
    For this, we can simply re-interpret the data we already obtained for the search problem (as depicted in Figs.~\ref{fig:tau_search}b and~\ref{fig:tau_search_scaling}b).

    We collected the runtime scaling for a range of intermediate thresholds, and for each we fit both a power law and an exponential, as shown in Fig.~\ref{fig:comparison_thresholds}.
    The quantity of interest is the \emph{goodness-of-fit} $R_\times^2$ (which we introduce in Appendix~\ref{a:further_details}).
    While both fits are similarly-good, we observe a phase transition: a power law is a better fit for thresholds below $0.66$, and an exponential fit is better above this value.
    In comparison, the expected performance of Prange's algorithm in these instances is $0.75$, above the threshold where the exponential fit is better than the power-law one.
 
\subsection{Discussion}\label{ss:discussion}

    A key takeaway from our results is that using classical \emph{sampling} algorithms to solve \emph{optimization} problems produces competitive results.
    In familiar terms from quantum computing, we have simulated the output distribution of the DQI state vector $\DQI$ without simulating the quantum circuit that prepares it step-by-step.
    We close this section by briefly interpreting and discussing our empirical results.
    
    Regarding the search problem, our numerical results are fully consistent with the existing claims in the literature~\cite{Jordan2025optimization}.
    The empirical runtime of our MCMC methods for max-XORSAT is polynomial in the number of qubits, thus matching the qualitative behavior of common classical optimization algorithms.
    The empirical runtime we observe for OPI is exponential in the number of qubits, hence not challenging the claims for super-polynomial quantum advantage from DQI for this problem.
    Still, the basis of the exponential function we obtain is rather small.
    Assuming the observed behavior persists beyond the instance sizes we considered, our results predict that classical algorithms may be able to match the performance of DQI for OPI \emph{in practice} for a wide range of instance sizes.

    We note that the qualitative runtime scaling we observe for the search problem is roughly maintained also for the sampling problem.
    In this sense, we take our results as empirical evidence that the sampling version of these optimization problems is not arbitrarily harder than their search version.
    Beyond the qualitative asymptotic scaling, the main difference between our sampling experiments for max-XORSAT and OPI is which of the two sampling algorithms performs better.
    We see that \emph{keep-going} is faster for max-XORSAT and \emph{restart} is faster for OPI.
    We interpret this in terms of the overlap gap property discussed above~\cite{anschuetz2025decoded}: the solutions to max-XORSAT seem to cluster together, and the same may not be true for OPI.
    This may make intuitive sense just from sparsity considerations alone: the ensemble from which we draw the coefficient matrix $B$ for max-XORSAT ensures that few constraints are affected by a single bitflip, so the score $f(x)$ is stable.
    Conversely, all entries of the Vandermonde matrix we have in OPI are non-$0$, thus yielding a score function $f(x)$ that is highly unstable, even under small perturbations of $x$.
    
    We make one final comment on the possibility of \emph{warm-starting} our sampling algorithms.
    Using the language of Section~\ref{ss:mcmc-optim-samp}, suppose the Markov chain is such that the \emph{mixing time} is much larger than the \emph{intra-sample} time $\tau\gg I$.
    Then, our \emph{keep-going} algorithm could be modified to start from an arbitrary good sample, potentially one produced by DQI.
    We envision a situation where the cost of producing a single sample via DQI, $T_{\operatorname{DQI}}$, is higher than the intra-sample time but lower than the mixing time of our Markov chain: $I<T_{\operatorname{DQI}}<\tau$.
    In this situation, the best choice would be to run DQI once to produce an initial good sample, and take that sample as the initial state of the Markov chain.
    Characterizing such situations in practice is an exciting open problem.

    Our preliminary results on lowering the threshold for OPI remain inconclusive.
    Drawing strong conclusions would likely require reaching larger instance sizes, which ought to separate better the goodness-of-fit of both approaches.
    We interpret these results as evidence that the exponential runtime we observe in Fig.~\ref{fig:tau_search} is robust under perturbations of the target score.

\section{Conclusions and outlook
\label{sec:conclusion}}

    In this work, we have shed further light on the complexity of \emph{decoded quantum interferometry} (DQI), both theoretically and empirically.
    Our simplified analysis of the performance of DQI allowed us to decouple the three possible versions of the optimization problem: decision, search, and sampling.
    We have shown that studying the potential quantum advantage of DQI requires going beyond the decision problem, and we find that studying the hardness of sampling may require different tools than in previous quantum sampling settings.
    We have proposed to use classical approximate sampling algorithms to tackle the optimization tasks, specifically a usual variant of \emph{Markov chain Monte Carlo} (MCMC).
    We have performed several scaling experiments, where we report the number of steps necessary for MCMC methods to match the performance of DQI.
    Reaching beyond $100$ qubits, we have observed run-times consistent with previous claims of quantum speed-ups.
    In particular, the number of steps necessary for MCMC methods to match the performance of DQI displays a qualitative scaling of $1.096^{n_p}$, where $n_p$ is the number of qubits of the output state of DQI.
    Our combined results call for a nuanced study of the potential quantum advantage for DQI.

    Our goal was to compare usual MCMC approaches against the expected performance of DQI.
    In parallel, it would be interesting to perform similar scaling experiments using the best known classical heuristic algorithms, like Prange's algorithm and simulated annealing, as discussed in Refs.~\cite{Jordan2025optimization,khattar2025dqi_opi,shutty2026optimization}.
    Similarly, our experiments could be extended with an emphasis on testing our MCMC methods in terms of their per-instance sensitivity to different random seeds.

    We identify several avenues for progress beyond our work, in part enabled by our public repository~\cite{github_repository}.
    We expect interesting behavior to arise from fine-tuning the polynomial degree of the distribution.
    In our experiments, we set the polynomial degree to be the same as in the original DQI algorithm~\cite{Jordan2025optimization}.
    Since then, several improvements have been proposed, using quantum decoders~\cite{chailloux2024softdecoders,chailloux2025opi,shutty2026optimization,horinaga2026worst}, which would correspond to larger numbers of decodable errors.
    Further, specifically for MCMC methods, there may be a trade-off between the expected score and the shape of the score landscape (the higher the degree, the sharper the landscape), which may result in longer mixing times.
    In particular, it would be interesting to measure the effect of the choice of polynomial degree on the basis of the exponential fit for the mixing time of the Markov chain for OPI.
    Beyond extending our studies to other optimization problems, a promising avenue would be to extend our methods to other sampling strategies.
    Indeed, since we focused on off-the-shelf MCMC methods, it may be possible to improve the runtime by fine-tuning the sampler.
    An open question that arises from the combination of our analytical and numerical results is the potential benefit of using the marginals of the DQI distribution: we may be able to sample from low-body marginals classically efficiently, yet it may be the case that they are not helpful to solve the optimization problem.
    Finally, we highlight the possibility of extending our framework to Hamiltonian DQI~\cite{schmidhuber2025hamiltonian_dqi}.

More broadly, we believe that the combination of complexity-theoretic analysis and large-scale empirical benchmarking developed here provides a useful framework for assessing the prospects of DQI and related quantum optimization algorithms. While our results clarify several aspects of the computational landscape, they also expose fundamental open questions concerning the interplay between sampling complexity and optimization performance. It is our hope that pursuing this direction will ultimately lead to a deeper understanding of the extent to which quantum computers can outperform classical methods on practically relevant optimization problems.

\subsection*{Acknowledgements}
The authors would like to thank Maximilian J. Kramer, Louis Schatzki, and David Sutter for insightful discussions.
E.G.-F. is a 2023 Google PhD Fellowship recipient.
E.G.-F., L.B., and J.E. have been supported by the German Federal Ministry of Research, Technology and Space BMFTR (MuniQC-Atoms, QuSol, Hybrid++, PasQuops, PraktiQOM), Clusters of Excellence (ML4Q, MATH+), the QuantERA, the Munich Quantum Valley, Berlin Quantum, the Quantum Flagship (Millenion, Pasquans2), the DFG (CRC 183, SPP 2514), and the European Research Council (DebuQC).

IBM Bob and GPT‑5.6 were used to assist with code development and mathematical consistency checks during the preparation of this manuscript.
All scientific ideas, analyses, and conclusions are those of the authors, and all AI-assisted code and text were carefully reviewed, manually checked, and validated prior to inclusion.

\bibliography{references}

\onecolumngrid
\clearpage
\appendix

\numberwithin{theorem}{section}
\numberwithin{algorithm}{section}
\numberwithin{figure}{section}
\numberwithin{table}{section}

\begin{center}
\large{Supplementary material for \\ ``Approximate sampling from decoded quantum interferometry via Markov chain Monte Carlo methods''
}
\end{center}

\section{Construction of the optimal polynomial} \label{a:optimal_polynomial}

Let $f=f_1+\ldots+f_m$ be the objective function, with $f_i:\mathbb{F}_p\rightarrow\{+1,-1\}$, $r:=|f_i^{-1}(+1)|\in\{0,\ldots,p-1\}$ and
\begin{equation}
    P(f)=\sum_{k=0}^\ell \alpha_k f^k.
\end{equation}
In Ref.~\cite{Jordan2025optimization}, the authors show how to compute the optimal polynomial $P$ maximizing $\langle s\rangle$. 
The polynomial is not expressed directly in terms of the coefficients $\alpha_k$ but rather as
\begin{equation}
    P(f) = \sum_{k=0}^{\ell} u_k P^{(k)}\bigl(g_1(f_1), \ldots, g_m(f_m)\bigr),
\end{equation}
where
\begin{itemize}
    \item $P^{(k)}(x_1,\ldots,x_m)= \sum_{1 \le a_1 < \cdots < a_k \le m}
    x_{a_1} x_{a_2} \cdots x_{a_k}$ is the degree-$k$ elementary symmetric polynomial,
    \item $u_k = w_k\left/\sqrt{p^{n-k}\binom{m}{k}}\right.$,
    \item $g_i(x_i) := \frac{f_i(x_i) - \bar f}{\varphi} \in \left\{ \frac{1-\bar f}{\varphi}, \frac{-1-\bar f}{\varphi} \right\}$, where $\bar f = \frac{2r}{p} - 1$, and $\varphi = \sqrt{4r\left(1-\frac{r}{p}\right)}$.
\end{itemize}

The optimal polynomial $P$ is obtained by choosing $\vec w = (w_0,\ldots,w_\ell)^{\mathsf T} $ to be the eigenvector corresponding to the largest eigenvalue of the tridiagonal matrix
\begin{equation}
    A^{(m,\ell,\delta)} =
\begin{pmatrix}
0   & a_1 & 0   & \cdots & 0 \\
a_1 & \delta   & a_2 & \ddots & \vdots \\
0   & a_2 & 2\delta   & \ddots & 0 \\
\vdots & \ddots & \ddots & \ddots & a_\ell \\
0 & \cdots & 0 & a_\ell & \ell \delta
\end{pmatrix},
\end{equation}
where 
$a_k = \sqrt{k(m-k+1)}$ and 
\begin{equation}
\delta = \frac{p-2r}{\sqrt{r(p-r)}}.
\end{equation}

\section{A simplified performance analysis for DQI}\label{a:simplified_analysis}

    In this section we first relate the DQI distribution to the simpler uniform distribution.
    Then, we show that the uniform distribution is easy to characterize up to a level that relates to the error-correcting code that DQI tackles in solving the optimization problem.
    Recall the polynomial $P(f)=\sum_{k=0}^\ell\alpha_kf^k$ from Appendix~\ref{a:optimal_polynomial}.
    The DQI distribution over symbol strings $x\in\bbF_p^n$ is given by
    \begin{align}
        \PDQI(x) &= P^2(f(x)) = \sum_{k,k'=0}^\ell \alpha_k\alpha_{k'}f^{k+k'}(x)= \sum_{k=0}^{2\ell}\beta_kf^k(x),
    \end{align}
    where the $\beta_k$ are auxiliary coefficients and $\alpha_k$ are chosen so that $\sum_{x\in\bbF_p^n}\PDQI(x)=1$.
    Within the scope of DQI, we further assume that $\ell\leq\lfloor (d-1)/2\rfloor$, namely the degree of the polynomial is at most half the distance of the error-correcting code corresponding to the optimization problem.

    \begin{customlemma}{\ref{l:moments_dqi_unif}}[Moments of the DQI vs uniform distribution]
        Suppose we use DQI with a polynomial of degree $\ell$.
        Then, the $k^\text{th}$ moment of the objective function $f(x)$ under the DQI distribution is a linear combination of moments under the uniform distribution as
        \begin{align}
            \bbE_{x\sim \PDQI}\left[f^k(x)\right] &= p^n \sum_{k'=0}^{2\ell}\beta_{k'}\bbE_{x\sim\operatorname{Unif}}\left[f^{k+k'}(x)\right].
        \end{align}
    \end{customlemma}
    \begin{proof}
        We prove the statement directly.
        We start by anchoring what the moments of the uniform distribution look like, where $\calP_{\operatorname{Unif}}(x)=1/p^n$ for $x\in\bbF_p^n$. We find
        \begin{align}
            \bbE_{x\sim\operatorname{Unif}}\left[f^{k'}(x)\right] &= \sum_{x\in\bbF_p^n}\calP_{\operatorname{Unif}}(x) f^{k'}(x) = \sum_{x\in\bbF_p^n} \frac{1}{p^n} f^{k'}(x), \\
             \sum_{x\in\bbF_p^n}f^{k'}(x) &= p^n \bbE_{x\sim\operatorname{Unif}}\left[f^{k'}(x)\right].
        \end{align}
        Then, we manipulate the moments of the DQI distribution until they match the desired form
        \begin{align}
            \bbE_{x\sim \PDQI}\left[f^k(x)\right] &= \sum_{x\in\bbF_p^n} \PDQI(x) f^k(x) 
            = \sum_{x\in\bbF_p^n} \left(\sum_{k'=0}^{2\ell} \beta_{k'} f^{k'}(x)\right)f^k(x) 
            = \sum_{x\in\bbF_p^n}\sum_{k'=0}^{2\ell} \beta_{k'} f^{k+k'}(x) \\
            &= \sum_{k'=0}^{2\ell} \beta_{k'} \left(\sum_{x\in\bbF_p^n}f^{k+k'}(x)\right)
            = \sum_{k'=0}^{2\ell} \beta_{k'} p^n\bbE_{x\sim\operatorname{Unif}}\left[f^{k+k'}(x)\right]
            = p^n \sum_{k'=0}^{2\ell} \beta_{k'} \bbE_{x\sim\operatorname{Unif}}\left[f^{k+k'}(x)\right].
        \end{align}
    \end{proof}

    \begin{remark}[The $p^n$ factor:]
        Due to normalization, the coefficients $\beta_k$ scale as $1/p^n$, countering the extra factor that appears in Lemma~\ref{l:moments_dqi_unif}.
    \end{remark}

    \begin{lemma}[The first $d$ moments of the uniform distribution -- max-XORSAT]\label{l:distance_uniform_distribution_moments_max-XORSAT}
        Consider the case $p=2$, $r=1$.
        For all $k<d$, the $k^\text{th}$ moment of the objective function $f(x)$ under the uniform distribution fulfills $\bbE_{x\sim\operatorname{Unif}}\left[f^k(x)\right] = \bbE_{J\sim\operatorname{Binom}(m,1/2)}\left[(2J-m)^{k}\right]$.
        In particular, the odd moments are $0$.
    \end{lemma}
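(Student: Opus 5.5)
For max-XORSAT each constraint reads $\langle b_i,x\rangle = v_i$. Then $f_i(\langle b_i,x\rangle) = (-1)^{v_i}(-1)^{\langle b_i,x\rangle}$, which is $+1$ exactly when the constraint holds. Write $\sigma_i \coloneqq (-1)^{v_i}\in\{\pm1\}$. Expanding the $k$-th power gives
\begin{align}
    f^k(x) = \sum_{i_1,\ldots,i_k\in\{1,\ldots,m\}} \sigma_{i_1}\cdots\sigma_{i_k}\,(-1)^{\langle b_{i_1}+\cdots+b_{i_k},\,x\rangle}.
\end{align}
Averaging over uniform $x\in\bbF_2^n$ and using character orthogonality, $\bbE_x[(-1)^{\langle y,x\rangle}] = \mathds{1}[y=0]$, only tuples with $b_{i_1}+\cdots+b_{i_k}=0$ survive.

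To each tuple $(i_1,\ldots,i_k)$ I attach the vector $c\in\bbF_2^m$ with $c_i$ equal to the parity of the multiplicity of $i$ in the tuple. The condition $\sum_j b_{i_j}=0$ is exactly $B^\intercal c = 0$, i.e.\ $c\in C^\perp$. The Hamming weight of $c$ is at most $k<d$, and $C^\perp$ is linear, so its minimum distance equals its minimum nonzero weight. Hence $c=0$.

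It follows that the surviving tuples are exactly those in which every index appears an even number of times. For such tuples the product $\sigma_{i_1}\cdots\sigma_{i_k}=1$ as well, since each $\sigma_i$ appears an even number of times. Therefore
\begin{align}
    \bbE_{x\sim\operatorname{Unif}}[f^k(x)] = \#\{(i_1,\ldots,i_k)\in[m]^k : \text{each index appears an even number of times}\}.
\end{align}
This count is independent of both $B$ and $v$. In particular, for odd $k$ no index multiplicity pattern can be all-even, so the odd moments vanish.

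It remains to identify this count with the binomial moment. Let $\epsilon_1,\ldots,\epsilon_m$ be i.i.d.\ uniform signs. Then $2J-m \overset{d}{=} \sum_i \epsilon_i$ with $J\sim\operatorname{Binom}(m,1/2)$, since $J$ counts the $+1$'s. Expanding $\bbE[(\sum_i\epsilon_i)^k]$ the same way, each tuple contributes $\bbE[\epsilon_{i_1}\cdots\epsilon_{i_k}]$, which is $1$ if every index appears an even number of times and $0$ otherwise. This yields the same count, which proves the lemma.

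A cleaner way to phrase the argument is that for $k<d$ the $\pm1$ variables $(-1)^{\langle b_i,x\rangle}$ are $k$-wise independent uniform signs under uniform $x$. Degree-$k$ moments of their signed sum therefore coincide with those of i.i.d.\ signs.

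The main point requiring care is the reduction from a tuple with repeated indices to the parity vector $c$. The multiplicity-to-parity map must be handled correctly, so that $\lVert c\rVert_0\le k$ and no tuple is lost or double-counted. The rest is routine Fourier analysis over $\bbF_2^n$.
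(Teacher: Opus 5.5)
Your proof is correct and follows essentially the same route as the paper's. You expand $f^k$, apply character orthogonality over $\mathbb{F}_2^n$, use the distance bound to force the parity pattern of the index multiplicities (your $c$, the paper's $y \bmod 2$) to vanish, observe that the signs $(-1)^{v_i}$ then cancel, and identify the resulting count with the moments of a sum of independent Rademacher signs, i.e.\ of $2J-m$; the only difference is bookkeeping (ordered tuples versus multiplicity vectors weighted by multinomial coefficients), which is immaterial.
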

    As we write below, Lemma~\ref{l:distance_uniform_distribution_moments_max-XORSAT} is a special case of the more general Theorem~\ref{thm:distance_uniform_distribution_moments_maxLINSAT}.
    Nevertheless, we present it as an independent result, since the proof is considerably simpler.
    \begin{proof}[Proof of Lemma~\ref{l:distance_uniform_distribution_moments_max-XORSAT}]
        We prove this statement directly.
        We first express the moment as a sum over possible codewords of length up to $k$, and next observe that the only codeword of length less than $d$ is the all-$0$ codeword.
        We start by  inserting the definition of the objective function for max-XORSAT, to get
    \begin{align}
            \bbE_{x\sim\operatorname{Unif}}\left[f^k(x)\right] &= \sum_{x\in\bbF_2^n} \calP_{\operatorname{Unif}}(x) f^k(x)
             = \sum_{x\in\bbF_2^n}\frac{1}{2^n} \left(\sum_{i=1}^m (-1)^{\langle b_i,x\rangle + v_i}\right)^k.
        \end{align}
        As in the common definition of $f$, $b_i$ correspond to the rows of the coefficient matrix $B$.
        We expand the $k^\text{th}$ power as a sum with a multi-index $y=(y_i)_{i=1}^m$, with $y_i\in[k]\coloneqq\{0,\ldots,k\}$, and $\sum_{i=1}^my_i=k$,
        \begin{align}
            2^n \bbE_{x\sim\operatorname{Unif}}\left[f^k(x)\right] &= \sum_{x\in\bbF_2^n} \left(\sum_{i=1}^m (-1)^{\langle b_i,x\rangle + v_i}\right)^k
            = \sum_{x\in\bbF_2^n} \sum_{y_1+\cdots+y_m=k}\binom{k}{y_1,\ldots,y_m} (-1)^{\langle B^\intercal y,x\rangle + \langle y,v\rangle} \\
            &= \sum_{y_1+\cdots+y_m=k}\binom{k}{y_1,\ldots,y_m} (-1)^{\langle y,v\rangle}\sum_{x\in\bbF_2^n}(-1)^{\langle B^\intercal y,x\rangle} \\
            &= \sum_{y_1+\cdots+y_m=k}\binom{k}{y_1,\ldots,y_m} (-1)^{\langle y,v\rangle} 2^n\delta_{B^\intercal y\bmod{2},0}.
        \end{align}
        In the first line, we have replaced the sum over $i$ as an inner product with the multi-index $y$, using the multinomial theorem.
        This includes identifying $\langle b_i,x\rangle=[Bx]_i$, as well as $\langle y,Bx\rangle=\langle B^\intercal y,x\rangle$.
        On the last line, we have used the usual Fourier convolution identity $\sum_{x\in\bbF_2^n}(-1)^{\langle a,x\rangle}=2^n\delta_{a\bmod{2},0}$.
        The notation $\delta_{a,b}$ refers to the Kronecker delta, which equals $1$ if $a=b$, and $0$, otherwise.
        Visual inspection of the identity
        \begin{align}
            \bbE_{x\sim\operatorname{Unif}}\left[f^k(x)\right] &= \sum_{y_1+\cdots+y_m=k}\binom{k}{y_1,\ldots,y_m} (-1)^{\langle y,v\rangle} \delta_{B^\intercal y\bmod{2},0}
        \end{align}
        reveals that this is a signed sum over all $k$-length codewords in the error-correction code specified by the parity-check matrix $B^\intercal$.
        Here, we use the assumption that this error-correction code has distance $d$, which per definition is a lower bound on the Hamming weight of any nonzero codeword.
        In particular, this means that, for all $y$ with $\lvert y\rvert <d$, we have $\delta_{B^\intercal y\bmod{2},0}=\delta_{y\bmod{2},0}$.
        The Kronecker delta $\delta_{y\bmod{2},0}$ acts as a selector function, and it discriminates between two main cases, namely whether $k$ is even or odd.

        If $k$ is odd, we can rewrite $k=2a+1$, for $a\in\bbN$.
        It is not possible to fulfill $\lvert y\rvert=2a+1$ and $y\bmod{2}=0$ simultaneously, so all terms in the sum are $0$, and hence $\bbE_{x\sim\operatorname{Unif}}[f^{2a+1}(x)]=0$ for all $a<(d-1)/2$.
        For $k$ even, we can rewrite this as $k=2a$, for $a\in\bbN$.
        Then, $y\bmod{2}=0$ implies that not only the sum $\sum_i y_i=2a$ is even, but also that each term $y_i$ is even,
        \begin{align}
            \bbE_{x\sim\operatorname{Unif}}\left[f^{2a}(x)\right] &= \sum_{y_1+\cdots+y_m=2a}\binom{2a}{y_1,\ldots,y_m} (-1)^{\langle y,v\rangle} \delta_{y\bmod{2},0} \\
            &= \sum_{y_1+\cdots+y_m=2a}\binom{2a}{y_1,\ldots,y_m} (-1)^{\langle y,v\rangle} \prod_{i=1}^m\delta_{y_i\bmod{2},0} \\
            &= \sum_{y_1+\cdots+y_m=2a}\binom{2a}{y_1,\ldots,y_m} \prod_{i=1}^m\delta_{y_i\bmod{2},0}.
        \end{align}
        Since all entries of $y$ must be even, it follows that $(-1)^{\langle y,v\rangle}=1$.
        We next rewrite the Kronecker delta using the  identity $\delta_{y_i\bmod{2},0} = \frac{1}{2}\sum_{\varepsilon_i\in\{\pm1\}}\varepsilon_i^{y_i}$.
        The variables $\varepsilon_i$ are usually referred to as \emph{Rademacher} variables.
        Note that the product over $i\in[m]$ and the sum over $m$ Rademacher variables $\varepsilon\in\{\pm1\}^m$ commute, since 
        \begin{align}
            \prod_{i=1}^m\delta_{y_i\bmod{2},0} &= \prod_{i=1}^m\left(\frac{1}{2}\sum_{\varepsilon_i\in\{\pm1\}}\varepsilon_i^{y_i}\right) = \frac{1}{2^m}\sum_{\varepsilon\in\{\pm1\}^m}\prod_{i=1}^m\varepsilon_i^{y_i}.
        \end{align}
        With this, we can re-write the above expression as an average 
    \begin{align}
            \bbE_{x\sim\operatorname{Unif}}\left[f^{2a}(x)\right] &= \sum_{y_1+\cdots+y_m=2a}\binom{2a}{y_1,\ldots,y_m} \left(\frac{1}{2^m}\sum_{\varepsilon\in\{\pm1\}^m}\prod_{i=1}^m\varepsilon_i^{y_i}\right) \\
            &= \frac{1}{2^m}\sum_{\varepsilon\in\{\pm1\}^m} \sum_{y_1+\cdots+y_m=2a}\binom{2a}{y_1,\ldots,y_m} \prod_{i=1}^m\varepsilon_i^{y_i}
        \end{align}
        over Rademacher variables.
        We can re-interpret the inner sum via the multinomial theorem, to get
        \begin{align}
            \bbE_{x\sim\operatorname{Unif}}\left[f^{2a}(x)\right] &= \frac{1}{2^m}\sum_{\varepsilon\in\{\pm1\}^m} \sum_{y_1+\cdots+y_m=2a}\binom{2a}{y_1,\ldots,y_m} \prod_{i=1}^m\varepsilon_i^{y_i} = \frac{1}{2^m}\sum_{\varepsilon\in\{\pm1\}^m}\left(\sum_{i=1}^m \varepsilon_i\right)^{2a}.
        \end{align}
        The sum $\varepsilon_1+\cdots+\varepsilon_m$ only depends on how many of the variables take value $-1$.
        Denoting the number of $+1$ variables by $j\in\{0,\ldots,m\}$, the sum evaluates to $\varepsilon_1+\cdots+\varepsilon_m=2j-m$.
        We can re-write the sum over $\varepsilon$ as a sum 
\begin{align}
            \bbE_{x\sim\operatorname{Unif}}\left[f^{2a}(x)\right] &= \frac{1}{2^m}\sum_{\varepsilon\in\{\pm1\}^m}\left(\sum_{i=1}^m \varepsilon_i\right)^{2a} = \frac{1}{2^m}\sum_{j=0}^m\binom{m}{j}(2j-m)^{2a}
        \end{align}
       over $j$, accounting for duplicates.
        As a last step, we introduce a binomial-distributed random variable $J\sim\operatorname{Binom}(m,1/2)$, with $\bbP_{\operatorname{Binom}}[J=j]=\binom{m}{j}/2^m$, and we observe that the expression we obtained is exactly the expected value of $(2J-m)^{2a}$:
        \begin{align}
            \bbE_{x\sim\operatorname{Unif}}\left[f^{2a}(x)\right] &= \frac{1}{2^m}\sum_{j=0}^m\binom{m}{j}(2j-m)^{2a} = \sum_{j=0}^m\bbP_{\operatorname{Binom}}[J=j](2j-m)^{2a} = \bbE_{J\sim\operatorname{Binom}(m,1/2)}\left[(2J-m)^{2a}\right].
        \end{align}
        Substituting $k=2a$, we recover the claimed result.
    \end{proof}
    
    \begin{customthm}{\ref{thm:distance_uniform_distribution_moments_maxLINSAT}}[The first $d$ moments of the uniform distribution -- max-LINSAT]
        Suppose a max-LINSAT instance whose dual code has distance $d$.
        Let $J\sim\operatorname{Binom}(m, r/p)$ be a binomial-distributed random variable.
        Then, for all $k<d$, the $k^\text{th}$ moment of the objective function $f(x)$ under the uniform distribution fulfills
        \begin{align}
            \bbE_{x\sim\operatorname{Unif}}[f^k(x)] &= \bbE_{J}\left[(2J-m)^k\right].
        \end{align}
    \end{customthm}
    \begin{proof}
        We have $f(x)=\sum_i f_i(\langle b_i,x\rangle)$, with $b_i$ being the $i^\text{th}$ row of $B$: $\langle b_i,x\rangle=[Bx]_i$, and where 
    \begin{align}
        f_i(a)=\begin{cases}
            1& a\in F_i,\\
            -1  & a\notin F_i.
        \end{cases}
    \end{align}
    We take the sets $F_i$ all of the same size $r=\lvert F_i\rvert$.
    From this, it follows that $\sum_{a\in\bbF_p} f_i(a) = 2r-p$ for all $i\in[m]$.
    The (inverse) Fourier transform on $\bbF_p$ for the functions $f_i$ is given by
    \begin{align}
        \tf_i(y) &=\frac{1}{\sqrt{p}}\sum_{a\in\bbF_p}\omega_p^{-ay}f_i(a), \\
        f_i(a) &= \frac{1}{\sqrt{p}}\sum_{y\in\bbF_p} \omega_p^{ay}\tf_i(y),
    \end{align}
    where $\omega_p=e^{2\pi i/p}$ is a $p^\text{th}$ root of unity $\omega_p^p=1$.
    With these, we can express $f(x)$ in terms of the Fourier transforms 
    \begin{align}
        f(x)&= \sum_{i=1}^m f_i(\langle b_i,x\rangle)
        = \sum_{i=1}^m \left(\frac{1}{\sqrt{p}}\sum_{y\in\bbF_p} \omega_p^{\langle y\,b_i, x\rangle} \tf_i(y)\right)
        = \frac{1}{\sqrt{p}}\sum_{i=1}^m \sum_{y\in\bbF_p} \omega_p^{\langle y\,b_i, x\rangle} \tf_i(y)
    \end{align}
    of the individual terms $\tf_i$,
    and similarly for the $k^\text{th}$ power of the objective function, where we re-defined $i$ as a length-$k$ multi-index $i\in[m]^k$,
    \begin{align}
        f^k(x) &= \left(\sum_{i_0=1}^m f_{i_0}(\langle b_{i_0},x\rangle)\right)^k 
        =\left(\sum_{i_1=1}^m f_{i_1}(\langle b_{i_1},x\rangle)\right)\cdots\left(\sum_{i_k=1}^m f_{i_k}(\langle b_{i_k},x\rangle)\right)
        = \sum_{i\in[m]^k} \prod_{j=1}^k f_{i_j}(\langle b_{i_j},x\rangle) \\
        &= \sum_{i\in[m]^k} \prod_{j=1}^k \left(\frac{1}{\sqrt{p}} \sum_{y_j\in\bbF_p}\omega_p^{\langle y_j b_{i_j}, x\rangle} \tf_{i_j}(y_j)\right)
        =\frac{1}{\sqrt{p^{k}}}\sum_{i\in[m]^k}\sum_{y\in\bbF_p^k} \omega_p^{\left\langle\sum_{j=1}^k y_j b_{i_j}, x\right\rangle}\prod_{j=1}^k \tf_{i_j}(y_j).
    \end{align}

    For completeness: we denote $y_j\in\bbF_p$ as a scalar and $y\in\bbF_p^k$ as a vector $y = (y_{j})_{j=1}^k$ of dimension $k$.
    Note that the product $y_j b_{i_j}$ is a scalar-vector product, where $y_j\in\bbF_p$ is one of the entries of $y$ and $b_{i_j}$ is one of the rows of $B$, indexed by the $j^\text{th}$ entry of $i\in[m]^k$.
    We are interested in the moments of the uniform distribution given by
    \begin{align}
        \bbE_{x\sim\operatorname{Unif}}[f^k(x)] &= \sum_{x\in\bbF_p^n} \calP_{\operatorname{Unif}}(x) f^k(x) = \sum_{x\in\bbF_p^n} \frac{1}{p^n} f^k(x) = \frac{1}{p^n}\sum_{x\in\bbF_p^n} f^k(x) \\
        &= \frac{1}{p^n} \sum_{x\in\bbF_p^n} \left(\frac{1}{\sqrt{p^{k}}}\sum_{i\in[m]^k}\sum_{y\in\bbF_p^k} \omega_p^{\left\langle\sum_{j=1}^k y_{j} b_{i_j}, x\right\rangle}\prod_{j=1}^k \tf_{i_j}(y_{j})\right) \\
        &= \frac{1}{p^n\sqrt{p^k}} \sum_{i\in[m]^k}\sum_{y\in\bbF_p^k} \underbrace{\sum_{x\in\bbF_p^n}\omega_p^{\left\langle\sum_{j=1}^k y_{j} b_{i_j}, x\right\rangle}}_{p^n \delta_{\sum_{j=1}^k y_{j} b_{i_j}, 0 \bmod p} = p^n \delta_{B^\intercal\left(\sum_{j=1}^ky_j e_{i_j}\right),0\bmod p}}\prod_{j=1}^k \tf_{i_j}(y_{j}) \\
        &= \frac{1}{\sqrt{p^k}}\sum_{i\in[m]^k}\sum_{y\in\bbF_p^k} \delta_{B^\intercal\left(\sum_{j=1}^ky_j e_{i_j}\right),0\bmod p} \prod_{j=1}^k \tf_{i_j}(y_j).
    \end{align}
    In the second-to-last line, we introduced the elementary basis vectors $e_{i_j}$.
    Under the assumption that $k<d$, it follows that the only choices for $y$ which fulfill the condition of the Kronecker delta also fulfill a simpler condition, namely $\delta_{B^\intercal\left(\sum_{j=1}^ky_j e_{i_j}\right),0\bmod p} = \delta_{\left(\sum_{j=1}^ky_j e_{i_j}\right),0\bmod p}$.
    In other words, the only codeword of length $\leq k<d$ is the trivial all-$0$ codeword.
    We take this condition into the specification of the sum, to arrive at
    \begin{align}
        \bbE_{x\sim\operatorname{Unif}}[f^k(x)] &=\cdots = \frac{1}{\sqrt{p^k}}\sum_{i\in[m]^k}\sum_{\substack{y\in\bbF_p^k\\\sum_{j=1}^k y_j e_{i_j}=0\bmod p}} \prod_{j=1}^k \tf_{i_j}(y_j) \coloneqq \frac{1}{\sqrt{p^k}}\sum_{i\in[m]^k} F(i).
    \end{align}
    In this step,
    we have renamed the terms inside the first sum as a function of the multi-index $i$.
    Let $\calS_k$ be the symmetric group of $k$ elements.
    We note from visual inspection that $F$ is invariant under permutations: for any $i\in[m]^k$, and for any $\sigma\in\calS_k$, it holds $F(i)=F(\sigma(i))$, with $\sigma(i)\coloneqq(i_{\sigma(1)}, \ldots, i_{\sigma(k)})$.
    From this, it follows that $F$ only depends on how many entries of $i$ take each value.
    Let $n_1,\ldots,n_m\in\bbN$, with $\sum_{j=1}^mn_j=k$, and denote as $1^{n_1}2^{n_2}\cdots m^{n_m}$ a length-$k$ vector whose entries are ordered: $n_1$ many $1$s, $n_2$ many $2$s, and so on.
    Then, for any $i\in[m]^k$, there exist values $(n_1,\ldots,n_m)$ and a permutation $\sigma\in\calS_k$ such that $i=\sigma(1^{n_1}2^{n_2}\cdots m^{n_m})$.
    We abuse notation to highlight this fact, and write $F(i)\coloneqq  F(n_1,\ldots,n_m)$.
    In order to compute $\sum_{i\in[m]^k}F(i)$, we must then compute $F(n_1,\ldots,n_m)$, and account for the number of permutations that produce different results.

    We take on the first task, to get
\begin{align}
        F(n_1,\ldots,n_m) &\coloneqq \sum_{\substack{y\in\bbF_p^k\\\sum_{j=1}^k y_j e_{i_j}=0\bmod p}} \prod_{j=1}^k \tf_{i_j}(y_j) 
        = \prod_{j=1}^m\sum_{\substack{y\in\bbF_p^{n_j}\\\sum_{t=1}^{n_j}y_t=0\bmod p}} \prod_{t=1}^{n_j}\tf_j(y_t).
    \end{align}
    Note that we re-defined the summation index in the second equality.
    Let us consider an individual term of the multiplication.
    Assuming $n_j\geq2$, we exploit the fact that $\sum_{t=1}^{n_j}y_t=0\bmod p$ is equivalent to $\sum_{t=1}^{n_j-1}y_t=-y_{n_j}\bmod p$, so 
    that
    \begin{align}
        \sum_{\substack{y\in\bbF_p^{n_j}\\
        \sum_{t=1}^{n_j}y_t=0\bmod p}}\prod_{t=1}^{n_j}\tf_j(y_t) &= \sum_{y\in\bbF_p^{n_j-1}}\prod_{t=1}^{n_j-1}\tf_j(y_t)\tf_j\left(-\sum_{t=1}^{n_j-1}y_t\right) \\
        &= \sum_{y\in\bbF_p^{n_j-1}}\left(\prod_{t=1}^{n_j-1}\frac{1}{\sqrt{p}}\sum_{a_t\in\bbF_p}\omega_p^{-a_ty_t}f_j(a_t)\right)\left(\frac{1}{\sqrt{p}}\sum_{a_{n_j}\in\bbF_p}\omega_p^{-a_{n_j}\left(-\sum_{t=1}^{n_j-1}y_t\right)}f_j(a_{n_j})\right) 
        \\
        &= \frac{1}{\sqrt{p^{n_j}}}\sum_{y\in\bbF_p^{n_j-1}}\sum_{a\in\bbF_p^{n_j}}\omega_p^{\sum_{t=1}^{n_j-1}y_t(a_{n_j}-a_t)}\prod_{t=1}^{n_j}f_j(a_t) \\
        &= \frac{1}{\sqrt{p^{n_j}}}\sum_{a\in\bbF_p^{n_j}}\underbrace{\sum_{y\in\bbF_p^{n_j-1}}\omega_p^{\sum_{t=1}^{n_j-1}y_t(a_{n_j}-a_t)}}_{\prod_{j=1}^{n_j-1}\left(p\,\delta_{a_{n_j}-a_t,0}\right)}\prod_{t=1}^{n_j}f_j(a_t)\\
        &= \frac{p^{n_j-1}}{\sqrt{p^{n_j}}}\sum_{a_{n_j}\in\bbF_p}\prod_{t=1}^{n_j}f_j(a_{n_j}) = \frac{p^{n_j-1}}{\sqrt{p^{n_j}}}\sum_{a_{n_j}\in\bbF_p}f_j^{n_j}(a_{n_j}).\label{eq:1/sqrt{p}}
    \end{align}
    The main step in this calculation has been to perform a Fourier convolution.
    We recall that the functions $f_i$ take values $\{\pm1\}$, from which it follows that $f_j^{n_j}(a)=1$ if $n_j$ is even, and $f_j^{n_j}(a)=f_j(a)$ if $n_j$ is odd.
    From the specification of max-LINSAT, we have $\sum_{a\in\bbF_p}f_j(a)=2r-p$, for any $j\in\{1,\ldots,m\}$.
    With this, we conclude
    \begin{align}
        \sum_{\substack{y\in\bbF_p^{n_j}\\\sum_{t=1}^{n_j}y_t=0\bmod p}}\prod_{t=1}^{n_j}\tf_j(y_t) &= \cdots = \begin{cases}\sqrt{p^{n_j}} &\text{if $n_j$ even,}\\\sqrt{p^{n_j}}\frac{2r-p}{p} &\text{if $n_j$ odd.}\end{cases}
    \end{align}
    The derivation assumed $n_j\geq2$, but the formula also holds for $n_j=0,1$.
    For $n_j=0$, the product contains no elements, and the sum is trivial, producing $1$ as answer.
    For $n_j=1$, there is only one summand $y=0$, and only one multiplicative factor $\tf_j(0)$, which we recall evaluates to $(2r-p)/\sqrt{p}$.
    We can now compute $F(n_1,\ldots,n_m)$ by multiplying the factors corresponding to each $n_j$, in order to arrive at
    \begin{align}
        F(n_1,\ldots,n_m) &= \prod_{j=1}^m\sum_{\substack{y\in\bbF_p^{n_j}\\\sum_{t=1}^{n_j}y_t=0\bmod p}} \prod_{t=1}^{n_j}\tf_j(y_t)
        = \prod_{j=1}^m \sqrt{p^{n_j}} \left(\frac{2r-p}{p}\right)^{\mathds1[n_j\in2\bbN+1]} \\
        &= \sqrt{p^{\sum_{j=1}^m n_j}}\left(\frac{2r-p}{p}\right)^{\sum_{j=1}^m\mathds1[n_j\in2\bbN+1]} = \sqrt{p^k}\left(\frac{2r-p}{p}\right)^{\lvert\{j\in[m]\,|\,n_j\in2\bbN+1\}\rvert}.
    \end{align}
    Here, we have introduced the notation $2\bbN+1$ to refer to the set of odd integers and $\mathds{1}$ to refer to the indicator function.
    We conclude that the function $F$ depends on $n_1,\ldots,n_m$ only via how many of the numbers are indeed odd.
    We finally relate the sum $\sum_{i\in[m]^k}F(i)$ to a sum of $F(n_1,\ldots,n_m)$ over all choices of $n_1+\cdots+n_m=k$,
    \begin{align}
        \sum_{i\in[m]^k}F(i) = \sum_{n_1+\cdots +n_m=k}\binom{k}{n_1,\ldots,n_m}F(n_1,\ldots,n_m).
    \end{align}
    This identity featuring the multinomial coefficient follows from standard combinatorial arguments.
    With these, we can give an interpretable expression
    \begin{align}
        \bbE_{x\sim\operatorname{Unif}}[f^k(x)] &=\cdots = \frac{1}{\sqrt{p^k}} \sum_{n_1+\cdots n_m=k}\binom{k}{n_1,\ldots,n_m}F(n_1,\ldots,n_m) \\
        &= \frac{1}{\sqrt{p^k}} \sum_{n_1+\cdots +n_m=k}\binom{k}{n_1,\ldots,n_m}\sqrt{p^k}\left(\frac{2r-p}{p}\right)^{\lvert\{j\in[m]\,|\,n_j\in2\bbN+1\}\rvert} \\
        &= \sum_{n_1+\cdots +n_m=k}\binom{k}{n_1,\ldots,n_m}\left(\frac{2r-p}{p}\right)^{\lvert\{j\in[m]\,|\,n_j\in2\bbN+1\}\rvert}
    \end{align}
    for the moments of the objective function under the uniform distribution. 
    Indeed, the $k^\text{th}$ moment of the objective function is given by the number of odd summands among all possible ways of choosing $k$ non-negative integers adding up to $k$.
    
    We now rewrite this sum in terms of an expectation value of a binomial-distributed random variable.
    We first replace the indicator function for $n_j$ being odd with a simpler expression, for any individual $j\in\{1,\ldots,m\}$, to get
    \begin{align}
        \left(\frac{2r-p}{p}\right)^{\mathds{1}[n_j\in2\bbN+1]} &= \frac{1+\frac{2r-p}{p}}{2} + \frac{1-\frac{2r-p}{p}}{2}(-1)^{n_j} = \frac{r}{p}+\frac{p-r}{p}(-1)^{n_j} = \frac{r}{p}+\left(1-\frac{r}{p}\right)(-1)^{n_j}.
    \end{align}
    We next use this expression to consider the case where the exponent contains a sum over indicator functions:
    \begin{align}
        \left(\frac{2r-p}{p}\right)^{\sum_{j=1}^m\mathds{1}[n_j\in2\bbN+1]} &= \prod_{j=1}^m \left(\frac{2r-p}{p}\right)^{\mathds{1}[n_j\in2\bbN+1]} = \prod_{j=1}^m\left(\frac{r}{p}+\left(1-\frac{r}{p}\right)(-1)^{n_j}\right) \\
        &= \sum_{S\subseteq[m]}\left(\frac{r}{p}\right)^{m-\lvert S\rvert}\prod_{j\in S}\left(\left(1-\frac{r}{p}\right)(-1)^{n_j}\right) \\
        &= \sum_{S\subseteq[m]}\left(\frac{r}{p}\right)^{m-\lvert S\rvert}\left(1-\frac{r}{p}\right)^{\lvert S\rvert}\left(-1\right)^{\sum_{j\in S}n_j}.
    \end{align}
    We can now combine this in the complete sum with multinomial weights.
    In particular, we re-organize the terms to replace the multinomial coefficients by the power of a sum
    \begin{align}
        \bbE_{x\sim\operatorname{Unif}}[f^k(x)] &= \cdots = \sum_{n_1+\cdots+n_m=k}\binom{k}{n_1,\ldots,n_m} \left(\frac{2r-p}{p}\right)^{\sum_{j=1}^m\mathds{1}[n_j\in2\bbN+1]} \\
        &= \sum_{n_1+\cdots+n_m=k}\binom{k}{n_1,\ldots,n_m}\sum_{S\subseteq[m]}\left(\frac{r}{p}\right)^{m-\lvert S\rvert}\left(1-\frac{r}{p}\right)^{\lvert S\rvert} (-1)^{\sum_{j\in S}n_j} \\
        &= \sum_{S\subseteq[m]}\left(\frac{r}{p}\right)^{m-\lvert S\rvert}\left(1-\frac{r}{p}\right)^{\lvert S\rvert}\sum_{n_1+\cdots+n_m=k}\binom{k}{n_1,\ldots,n_m}(-1)^{\sum_{j\in S}n_j} \\
        &= \sum_{S\subseteq[m]}\left(\frac{r}{p}\right)^{m-\lvert S\rvert}\left(1-\frac{r}{p}\right)^{\lvert S\rvert}\left(\sum_{j=1}^m(-1)^{\mathds{1}[j\in S]}\right)^k \\
        &= \sum_{S\subseteq[m]}\left(\frac{r}{p}\right)^{m-\lvert S\rvert}\left(1-\frac{r}{p}\right)^{\lvert S\rvert}\left(-1\cdot\lvert S\rvert+1\cdot(m-\lvert S\rvert)\right)^k \\
        &= \sum_{S\subseteq[m]}\left(\frac{r}{p}\right)^{m-\lvert S\rvert}\left(1-\frac{r}{p}\right)^{\lvert S\rvert}\left(m-2\lvert S\rvert\right)^k.
    \end{align}
    Here, we have used the multinomial theorem for the sum $(x_1+\ldots+x_m)^k$ with $x_j=-1$ if $n_j\in S$ and $x_j=1$ otherwise.
    We note that the final expression depends only on the size of the auxiliary subsets $S$, we thus introduce the variable $j=m-\lvert S\rvert\in\{0,\ldots,m\}$ and account for all possible different subsets of the same size, exploiting the symmetry $\binom{m}{j}=\binom{m}{m-j}$:
    \begin{align}
        \bbE_{x\sim\operatorname{Unif}}[f^k(x)] &= \cdots = \sum_{j=0}^m \binom{m}{j}\left(\frac{r}{p}\right)^{j}\left(1-\frac{r}{p}\right)^{m-j}\left(2j-m\right)^k.
    \end{align}
    To complete the proof, we introduce a random variable $J\in\{0,\ldots,m\}$ which follows a binomial distribution 
    \begin{align}
        \bbP_{\operatorname{Binom}}[J=j]=\binom{m}{j}\left(\frac{r}{p}\right)^{j}\left(1-\frac{r}{p}\right)^{m-j}
    \end{align}
    with parameter $r/p$.
    Then, we can interpret the sum as the expectation value of the function $(2J-m)^k$,
    \begin{align}
        \bbE_{J\sim\operatorname{Binom}\left(m, r/p\right)}\left[(2J-m)^k\right] &\coloneqq \sum_{j=0}^m\bbP_{\operatorname{Binom}}[J=j]\left(2j-m\right)^k = \sum_{j=0}^m\binom{m}{j}\left(\frac{r}{p}\right)^{j}\left(1-\frac{r}{p}\right)^{m-j}\left(2j-m\right)^k
    \end{align}
    with respect to $J$.
    All together, we reach the claimed statement
    \begin{align}
        \bbE_{x\sim\operatorname{Unif}}[f^k(x)] &= \sum_{j=0}^m\binom{m}{j}\left(\frac{r}{p}\right)^{j}\left(1-\frac{r}{p}\right)^{m-j}\left(2j-m\right)^k = \bbE_{J\sim\operatorname{Binom}\left(m,\frac{r}{p}\right)}\left[(2J-m)^k\right].
    \end{align}
\end{proof}

    \begin{corollary}[Consistency]
        Lemma~\ref{l:distance_uniform_distribution_moments_max-XORSAT} is a special case of Theorem~\ref{thm:distance_uniform_distribution_moments_maxLINSAT} with $p=2, r=1$.
    \end{corollary}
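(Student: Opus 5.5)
The corollary asks us to check that substituting $p=2$, $r=1$ into Theorem~\ref{thm:distance_uniform_distribution_moments_maxLINSAT} gives exactly Lemma~\ref{l:distance_uniform_distribution_moments_max-XORSAT}. Nothing in Theorem~\ref{thm:distance_uniform_distribution_moments_maxLINSAT} restricts $p$ beyond primality, and $p=2$ is prime. So the plan is to first match the hypotheses and objects, and then specialize the conclusion.

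\emph{Matching the setup.} For $p=2$ and $r=1$, each set $F_i\subset\bbF_2$ is a singleton, say $F_i=\{v_i\}$. Then $f_i(a)=+1$ exactly when $a=v_i$, which means $f_i(a)=(-1)^{a+v_i}$. Hence $f(x)=\sum_i(-1)^{\langle b_i,x\rangle+v_i}$, which is precisely the max-XORSAT objective used in the proof of Lemma~\ref{l:distance_uniform_distribution_moments_max-XORSAT}. The dual code $C^\perp$, with parity-check matrix $B^\intercal$ over $\bbF_2$, is the same object in both statements. Its distance $d$ is therefore the same, and the range $k<d$ agrees.

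\emph{Specializing the conclusion.} Theorem~\ref{thm:distance_uniform_distribution_moments_maxLINSAT} gives
\begin{align}
\bbE_{x\sim\operatorname{Unif}}[f^k(x)]=\bbE_{J\sim\operatorname{Binom}(m,r/p)}\left[(2J-m)^k\right],
\end{align}
and with $r/p=1/2$ this is literally the formula in Lemma~\ref{l:distance_uniform_distribution_moments_max-XORSAT}. It remains to recover the "in particular, odd moments vanish" clause. I would use the symmetry $J\mapsto m-J$: it preserves $\operatorname{Binom}(m,1/2)$ and sends $2J-m$ to $-(2J-m)$. So $2J-m$ is a symmetric random variable, and all of its odd moments are zero. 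Equivalently, one can read this off the intermediate formula in the theorem's proof: the weight factor $\bigl((2r-p)/p\bigr)^{\#\text{odd } n_j}$ equals $0$ whenever some $n_j$ is odd, and for odd $k$ at least one $n_j$ must be odd.

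\emph{Expected obstacle.} The main subtlety is whether the Fourier step in the proof of Theorem~\ref{thm:distance_uniform_distribution_moments_maxLINSAT} is valid at $p=2$. There $\omega_2=-1$, and the convolution identity $\sum_{x\in\bbF_2^n}(-1)^{\langle a,x\rangle}=2^n\delta_{a,0}$ is the standard one, so no special case arises. I would add a brief remark confirming that $0^0=1$ is the intended convention for $n_j$ even. That convention gives the factor $1$ for even $n_j$ and $0$ for odd $n_j$, consistent with the even-only sum derived in Lemma~\ref{l:distance_uniform_distribution_moments_max-XORSAT}.
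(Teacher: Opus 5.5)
Your proposal is correct and takes the same route as the paper: specialize Theorem~\ref{thm:distance_uniform_distribution_moments_maxLINSAT} to $p=2$, $r=1$, so that $\operatorname{Binom}(m,r/p)=\operatorname{Binom}(m,1/2)$. Your extra checks are valid and fill in details the paper's one-line proof leaves implicit: identifying $f_i(a)=(-1)^{a+v_i}$ so the two objectives coincide, and deriving the vanishing odd moments from the symmetry $J\mapsto m-J$.
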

    \begin{proof}
        The proof is direct, since for $r=1,p=2$, we have $\operatorname{Binom}(m,r/p)=\operatorname{Binom}(m,1/2)$.
    \end{proof}
    
    \begin{remark}[Independence of $F$:]
        Note that the definition of the objective function $f$ depends both on (1) the matrix of coefficients $B$, and (2) the choice of satisfiability sets $F$.
        The first $d$ moments of the uniform distribution do \emph{not} depend on the choice of $F$, but rather only on the size of the sets $r$ and the matrix $B$, and then only via 
        the \emph{distance} $d$ of the error-correcting code with parity-check matrix $B^\intercal$.
        Because of this, it follows that the expectation value of the objective function $f(x)$ under the DQI distribution is itself independent of the choice of $F$ of size $r$, which we present in Theorem~\ref{thm:DQI_performance_simplified}.
    \end{remark}

    \begin{customthm}{\ref{thm:DQI_performance_simplified}}[Moments of the DQI distribution]
        Suppose we use DQI with a polynomial of degree $\ell$ fulfilling $2\ell+1\leq d$, where $d$ is the distance of the corresponding error-correction code.
        Then, for any $k < d - 2\ell$, the $k^\text{th}$ moment  of the objective function $f(x)$ under the DQI distribution fulfills
    \begin{align}
            \bbE_{x\sim \PDQI}\left[f(x)^k\right] &= p^n  \bbE_{J\sim\operatorname{Binom}\left(m,\frac{r}{p}\right)}\left[\sum_{k'=0}^{2\ell}\beta_{k'}(2J-m)^{k+k'}\right].
        \end{align}
    \end{customthm}
    \begin{proof}
        We prove the statement directly, by combining Lemma~\ref{l:moments_dqi_unif} and Theorem~\ref{thm:distance_uniform_distribution_moments_maxLINSAT}, to get
        \begin{align}
            \bbE_{x\sim \PDQI}\left[f(x)^k\right] &= p^n\sum_{k'=0}^{2\ell} \beta_{k'}\bbE_{x\sim\operatorname{Unif}}\left[f^{k+k'}(x)\right]
            = p^n \sum_{k'=0}^{2\ell} \beta_{k'} \bbE_{J\sim\operatorname{Binom}\left(m,\frac{r}{p}\right)}\left[(2J-m)^{k+k'}\right] \\
            &= p^n  \bbE_{J\sim\operatorname{Binom}\left(m,\frac{r}{p}\right)}\left[\sum_{k'=0}^{2\ell}\beta_{k'}(2J-m)^{k+k'}\right].
        \end{align}
    \end{proof}

\section{Efficient evaluation of DQI probabilities
\label{a:efficient_probabilities}}
We describe how to efficiently compute the DQI probabilities $P^2(f(x))$ for a given $x$ during MCMC, which is critical for running the algorithm at large scales.

\paragraph{Lookup table.} The objective $f(x)$ takes only $m+1$ distinct integer values since it is the difference of satisfied and violated constraints.
We therefore precompute $P^2(y)$ for all $y\in\{-m,-m+2,\ldots,m\}$ and store the results in a lookup table of size $O(m)$.
Evaluating $P^2(f(x))$ then costs $O(1)$.
The objective $f(x)$ depends on the number of satisfied constraints in a max-LINSAT instance $(B, \{F_i\})$. 
Its efficient evaluation depends on the structure of $B$, so we treat the two problem families separately.

\paragraph{Max-XORSAT (sparse $B$, $p=2$).}
Here $B$ is $s$-row-sparse and $t$-column-sparse, i.e., each row has at most $s$ nonzeros and each column at most $t$.
We cache the vector $\mathtt{bx}=Bx \mod 2\in\mathbb{F}_2^m$ across steps.
Given $\mathtt{bx}$, computing $n_{\text{sat}} = |\{i : \mathtt{bx}_i = v_i\}|$ and hence $f(x) = 2n_{\text{sat}} - m$ costs $O(m)$.  
Flipping a single variable $x_j$ affects only the entries of $\mathtt{bx}$ indexed by the nonzero rows of column $j$ of $B$, so the update costs $O(t)$.  
To evaluate all $2^\kappa$ assignments for a block of $k$ variables, we enumerate them in Gray-code order so consecutive candidates differ in exactly one variable. 
The total work per step is therefore $O(2^\kappa(t + m))$: $O(t)$ to update $\mathtt{bx}$ per Gray-code transition and $O(m)$ to recompute $n_{\text{sat}}$ per candidate.
Initialization (forming $\mathtt{bx}$ via a sparse matrix--vector product) costs $O(ms)$, giving a total cost of $O(ms + 2^\kappa m T)$ after $T$ steps.  
For fixed $\kappa$ and constant sparsity parameters $s,t$, with $t \ll m$, the per-step cost is $O(m)$.  
In practice, the inner loop over Gray-code candidates is JIT-compiled via \textsc{Numba}~\cite{lam2015numba}, eliminating Python overhead and making the $O(m)$ scan per candidate fast enough for the instance sizes in Section~\ref{s:results}.

\paragraph{OPI (dense Vandermonde $B$, large prime $p$).}
Here $B$ is a dense $m \times n$ Vandermonde matrix with entries $B_{ij} = \gamma^{i(j-1)}\bmod p$.  
The score $n_{\text{sat}}$ is computed evaluating $Q(\gamma^i) = \sum_{j=1}^{n} x_j \gamma^{i(j-1)} \bmod p$ for each $i$ and checking membership in $F_i$ via a precomputed boolean mask, which requires a full $O(mn)$ matrix--vector product.  
Unlike max-XORSAT, no incremental $Bx$ cache is maintained between candidates: $f$ is recomputed for each of the $p^\kappa$ candidate assignments in the block, giving a per-step cost of $O(p^\kappa mn)$.
The one-time precomputation of the $m \times n$ power table $\gamma^{i(j-1)} \bmod p$ and the $m \times p$ membership mask costs $O(mn)$ and $O(mp)$ respectively. The internal loop is again JIT-compiled via \textsc{Numba}, and split into
blocks that are run in parallel using \textsc{Numba}'s \texttt{prange}.

\section{Extended discussion on the complexity of DQI}\label{a:extended_complexity}

\subsection{Obstacle against proving hardness of sampling}\label{aa:counting}

One possible direction to tackle the complexity of sampling would be to reduce the sampling problem into a counting problem.
Counting problems have been used to argue about the hardness of other sampling problems based on quantum circuits, like random circuit sampling and other quantum random sampling schemes \cite{SupremacyReview}.
We argue that the mechanisms for the potential hardness of sampling are different there.

For DQI, one natural counting problem would be to evaluate the 
function $M(y)\coloneqq\lvert\{x\in\bbF_p^n\,|\,f(x)=y\}\rvert$.
In general, computing $M$ exactly is $\#\mathsf{P}$-hard~\cite{valiant1979complexity}, so we do not expect efficient algorithms (quantum or classical) to succeed.
Recall that DQI is a sampling algorithm whose outcome probabilities can be computed efficiently classically.
Having access to samples from the DQI distribution is not enough in general to compute $M$ exactly.
Rather, having access to samples from the DQI distribution allows us to estimate $M$ to a certain additive precision.
To set up a concrete comparison, we consider two classical players, Alice and Bob, and we assume that Alice has black-box access to polynomially-many exact samples from the DQI distribution.
The question is then: \emph{can Alice efficiently estimate $M$ to a certain precision which Bob cannot efficiently replicate?}

We argue heuristically why Bob may be able to replicate Alice's capabilities in general, up to polynomial overheads.
Alice can obtain samples efficiently from the DQI distribution, from which she can estimate certain moments $\bbE_{x\sim\PDQI}[f^k(x)]$.
In absence of further resources, the precision to which Alice can estimate the counting function $M$ will depend on the moments she can estimate: with higher $k$ yielding the better precision.
The same fact applies to Bob, with the difference that he does not have access to the samples, and thus must consider alternative ways to estimate those moments.
From Lemma~\ref{l:moments_dqi_unif}, we know how the moments Alice obtains relate to the moments of the uniform distribution, and further Theorem~\ref{thm:distance_uniform_distribution_moments_maxLINSAT} relates them to those of a binomial distribution.
In this sense, we expect Bob to be able to reach a similar level of precision as Alice, based on both their abilities to estimate certain moments of the objective function.

Strictly speaking, our results only work for $k$ satisfying $k+2\ell\leq d$, where $\ell$ is the degree of the polynomial used by DQI, and $d$ is the distance of the error-correcting code.
A priori there is no reason for Alice to set $\ell \ll d/2$, and hence Alice may be able to use the samples to estimate higher-moments than those available from our results.
Still, it may be the case that the available efficient decoders for broad classes of problems cannot correct large numbers of errors.
In any case, we expect that an approximate version of Theorem~\ref{thm:distance_uniform_distribution_moments_maxLINSAT} still holds if one goes slightly beyond the code distance (as is the case for the performance guarantees of DQI~\cite{Jordan2025optimization}).

Finally, we highlight one heuristic argument supporting that higher moments translate to better precision.
Theorem~\ref{thm:distance_uniform_distribution_moments_maxLINSAT} relates the distance of the error-correcting code to the distribution of scores.
If we drew an instance of max-LINSAT uniformly at random, we would expect the distribution of scores to be exactly binomially-distributed.
Estimating $M$ is easy in this case, since $M(y)$ would closely follow $\operatorname{Binom}(m,r/p)$.
The families of problems we discuss in the main text, max-XORSAT and OPI, are not drawn uniformly at random, and hence their score distribution is not exactly binomial-distributed.
Still, we can use Theorem~\ref{thm:distance_uniform_distribution_moments_maxLINSAT} to argue that the higher the 
distance of the dual code for a given instance of max-LINSAT, the closer its distribution of scores becomes to the uniform distribution (in the sense that more of their moments agree).
Then, in a way, the \emph{counting} problem that we could solve with DQI becomes \emph{easier} in the regimes where DQI ought to give an advantage.

With this, we only point out 
that there may be an obstruction in proving that sampling from the DQI distribution is hard for efficient classical algorithms via a reduction to counting. Our arguments do not affect our ability to efficiently implement DQI on a quantum computer, assuming an efficient decoder is known.

\subsection{Obstacles against the standard reduction from search to decision}\label{aa:decision}

Let us consider an abstract optimization problem specified by an objective function $f:\bbF_2^n\to\bbZ$.
One natural way to define a \emph{decision} problem would be, given a threshold $\alpha\in\bbZ$, decide between:
\begin{itemize}
    \item \textbf{YES}: if there exists $x\in\bbF_2^n$ such that $f(x)\geq\alpha$.
    \item \textbf{NO}: for all $x\in\bbF_2^n$, it holds $f(x) < \alpha$.
\end{itemize}
One natural way to define a \emph{search} problem would be, given a threshold $\alpha\in\bbZ$, and under the promise that the following exists, find $x\in\bbF_2^n$ such that $f(x)\geq\alpha$.

Let us introduce the set of \emph{guesses} $G=\{0,1,\bot\}^n$.
Indeed, the entries of a guess $g=(g_j)_{j=1}^n\in G$ may be either $g_j\in\bbF_2$ or $g_j=\bot$, which we call \emph{indefinite}.
Given a guess $g\in G$, we say it is $N$-definite, with $N\in\{0,\ldots,n\}$ if $N$ of its entries are non-$\bot$.
Given $f$ and an $N$-definite guess $g\in G$, we introduce the \emph{reduced problem} $f_g:\bbF_2^{n-N}\to\bbZ$, which corresponds to the original problem up to fixing $x$ to agree with $g$ in the $N$ entries that are not $\bot$.

Given $f$ and $\alpha$, suppose we have access to an oracle $O_\alpha:G\to\{\textbf{YES},\textbf{NO}\}$ which can solve the decision problem and all its reduced versions.
Then, we can solve the search problem with threshold $\alpha$ by querying the oracle exactly $n$ times, by iteratively building up a complete guess:
\begin{enumerate}
    \item Initialize $g\gets(\bot)^n$, which is $0$-definite.
    \item Repeat $n$ times, on the $t^\text{th}$ step:
    \begin{enumerate}
        \item Fix the $t^\text{th}$ component of the guess $g_{t}\gets0$, which is now $t$-definite.
        \item Query the oracle. If $O_\alpha(g)=\textbf{YES}$, move to the next iteration; else correct the $t^\text{th}$ component of the guess $g_{t}\gets 1$.
    \end{enumerate}
    \item Output $x=g$.
\end{enumerate}
Under the promise that a 
solution exists, $O_\alpha\left((\bot)^n\right)=\textbf{YES}$, we reach a solution $x$ after $n$ queries.
The iterative procedure ensures that there exists a solution which agrees with the definite entries of the growing guess at all times, hence no back-tracking is required.
This procedure can be straightforwardly generalized on fields $\bbF_p$, where every iteration requires testing up to $p-1$ values per coordinate.

In the main text, we have not emphasized this fact, but one could interpret the performance guarantees of Ref.~\cite{Jordan2025optimization} as an oracle.
Let $f:\bbF_p^n\to\{-m,-m+2,\ldots,m\}$ be the objective function of max-LINSAT, with 
prime $p$, $m$ constraints, $n$ variables, and $r$-sized sets.
Suppose we have an efficient decoder for errors of weight up to $\ell$.
Then, the expected number of satisfied constraints $\sDQI$ using DQI fulfills
\begin{align}\label{eq:perf_guarantee}
    \frac{\sDQI}{m} &=\left(\sqrt{\frac{\ell}{m}\left(1-\frac{r}{p}\right)}+\sqrt{\left(1-\frac{\ell}{m}\right)\frac{r}{p}}\right)^2.
\end{align}
This fact alone tells us that there exist $x\in\bbF_p^n$ such that $f(x)\geq 2\sDQI-m$.
One may then wonder whether this knowledge suffices to instantiate the oracle introduced above with a carefully chosen $\alpha=\alpha(\ell,m,r,p)$, or a probabilistic version thereof.
We argue that such an approach would not work in general, due to the interplay between the number of decodable errors and the reduced problem for max-LINSAT.
Given an $N$-definite guess $g\in G$, it may not be easy to assess whether there exist $x\in\bbF_p^n$ such that $f_g(x)\geq 2\sDQI-m$.
Trying to answer this question directly would involve using Eq.~\eqref{eq:perf_guarantee} for the reduced problem $f_g$, which would require us to know the corresponding number of decodable errors.
We identify at least two main obstacles:
\begin{enumerate}
    \item The reduced problem arising from $N$-definite guesses for large $N$ involves only a few variables $n-N$.
        In the reduction from optimization to decoding, we observe that the number of variables in the optimization problem corresponds to the number of parity checks in the error-correcting code.
        Knowing that an error-correcting code with few checks cannot have large distance, it follows that $\sDQI/m$ for the reduced problem converges to $r/p$, and hence we cannot use Eq.~\eqref{eq:perf_guarantee} to instantiate the oracle with the threshold $\alpha$ corresponding to the original problem.
        This obstacle prevents us from realizing at least the last few steps of the procedure above, since the decision oracle \enquote{breaks down} for small instance sizes.
    \item The performance guarantee assumes not only that $\ell$ is known, but also that an efficient decoder is provided.
        From decoding being $\mathsf{NP}$-hard, it follows that certain structure is required for DQI to perform well.
        For us to be able to use Eq.~\eqref{eq:perf_guarantee} throughout the procedure we would then need the added requirement that the structure is preserved under reducing the problem for increasingly-definite guesses.
\end{enumerate}

To be precise: the above obstacles only prevent a direct reduction from search to decision.
These obstacles are fully unrelated to DQI's ability to solve the search (and sampling) problems based on certain families of max-LINSAT.
This way, it seems that making statements about the complexity of DQI requires studying the search problem, and not only the decision one.

\section{Further details numerical results}\label{a:further_details}

    In this section, we provide extended data from our numerical experiments, complementing the summary figures in Section~\ref{s:results} in the main text.
    Each of the following figures is a direct expansion of some figure in the main text, namely: Fig.~\ref{fig:ell_dqi_max-XORSAT_scaling} expands on Fig.~\ref{fig:ell_dqi_max-XORSAT}a, Fig.~\ref{fig:tau_search_scaling} expands on Fig.~\ref{fig:tau_search}a and c, Figs.~\ref{fig:tau_sampling_scaling_max-XORSAT} and~\ref{fig:tau_sampling_scaling_OPI} expand on Fig.~\ref{fig:tau_sampling}a and c, respectively.
    We also collect the numerical fits on all plots in Table~\ref{tab:fits}.
    
    Let us consider a dataset $X, Y$, with $X=\{x_i\}_{i=1}^N$ and $Y=\{y_i\}_{i=1}^N$, as well as the values produced by a candidate fit $Z=\{z_i\}_{i=1}^N$.
    Suppose the fit is a function $h$ such that $z_i=h(x_i)$, and $h$ has been chosen with the goal of approximating $Y$: $z_i\approx y_i$.
    Let \begin{equation}\bar y\coloneqq\frac{1}{N}\sum_{i=1}^Ny_i.
    \end{equation}
    Then, we consider the usual 
    \begin{align}
        R^2(Y,Z) &= 1 - \frac{\sum_{i=1}^N(y_i-z_i)^2}{\sum_{i=1}^N(y_i-\bar y)^2}
    \end{align}
    quantity as our main figure of merit for \emph{goodness-of-fit}.
    Since we are interested in qualitative scaling behavior, we consider the multiplicative variant 
    \begin{align}
        R^2_\times(Y,Z) &= 1 - \frac{\sum_{i=1}^N(\log(y_i)-\log(z_i))^2}{\sum_{i=1}^N(\log(y_i)-\bar y_{\log})^2}
    \end{align}
    as well, for which we define $\bar y_{\log} = \frac{1}{N}\sum_{i=1}^N\log(y_i)$ (which is well defined for $y_i>0$).
    Due to the properties of the logarithm, if the fit $Z$ only disagrees with $Y$ by small multiplicative factors, the score $R_\times^2$ can still be reasonably high.
    Furthermore, the procedure we use for fitting both power-laws and exponentials explicitly aims to minimize the square difference of logarithms, thus further justifying this figure of merit.
    Since $R_\times^2$ is more forgiving in terms of relative errors, we note that both power-law and exponential fits can be consistent with the same data, as shown in Fig.~\ref{fig:comparison_thresholds}a.

\begin{table}[t]
    \centering
    \caption{
        \textbf{Numerical data fits.}
        Each row clearly specifies each of the fits appearing in all the figures.
        Refer to Sec.~\ref{a:further_details} for the definition of $R^2$ and $R_\times^2$ as quality metrics.
    }
    \begin{tabular}{cccc}\hline\hline
        Experiment & Reference & Fit & Quality \\\hline\hline
        DQI polynomial degree & Fig.~\ref{fig:ell_dqi_max-XORSAT}c & $\ell_{\operatorname{dqi}}=0.20n -72$ & $R^2=0.999$ \\
        Search max-XORSAT & Fig.~\ref{fig:tau_search}b & $\tau_{\avg} = 1.88 \times 10^{-8}\, n^{4.90}$  & $R^2_\times=0.978$ \\
        Search OPI & Fig.~\ref{fig:tau_search}d & $\tau_{\min} = 23\, \left(1.096\right)^{n_p}$  & $R^2_\times=0.963$ \\
        Sampling max-XORSAT \emph{restart} & Fig.~\ref{fig:tau_sampling}b & $\tau_{10} = 1.38 \times 10^{-6}\, n^{4.51}$  & $R^2_\times=0.970$ \\
        Sampling max-XORSAT \emph{keep-going} & Fig.~\ref{fig:tau_sampling}b & $\tau_{10} = 2.67 \times 10^{-6}\, n^{4.02}$  & $R^2_\times=0.942$ \\
        Sampling OPI \emph{restart} & Fig.~\ref{fig:tau_sampling}d & $\tau_{10} = 41\, \left(1.104\right)^{n_p}$  & $R^2_\times=0.962$ \\
        Sampling OPI \emph{keep-going} & Fig.~\ref{fig:tau_sampling}d & $\tau_{10} = 266\, \left(1.103\right)^{n_p}$  & $R^2_\times=0.961$ \\\hline\hline
    \end{tabular}
    \label{tab:fits}
\end{table}

\begin{figure}[t]
    \centering
    \includegraphics{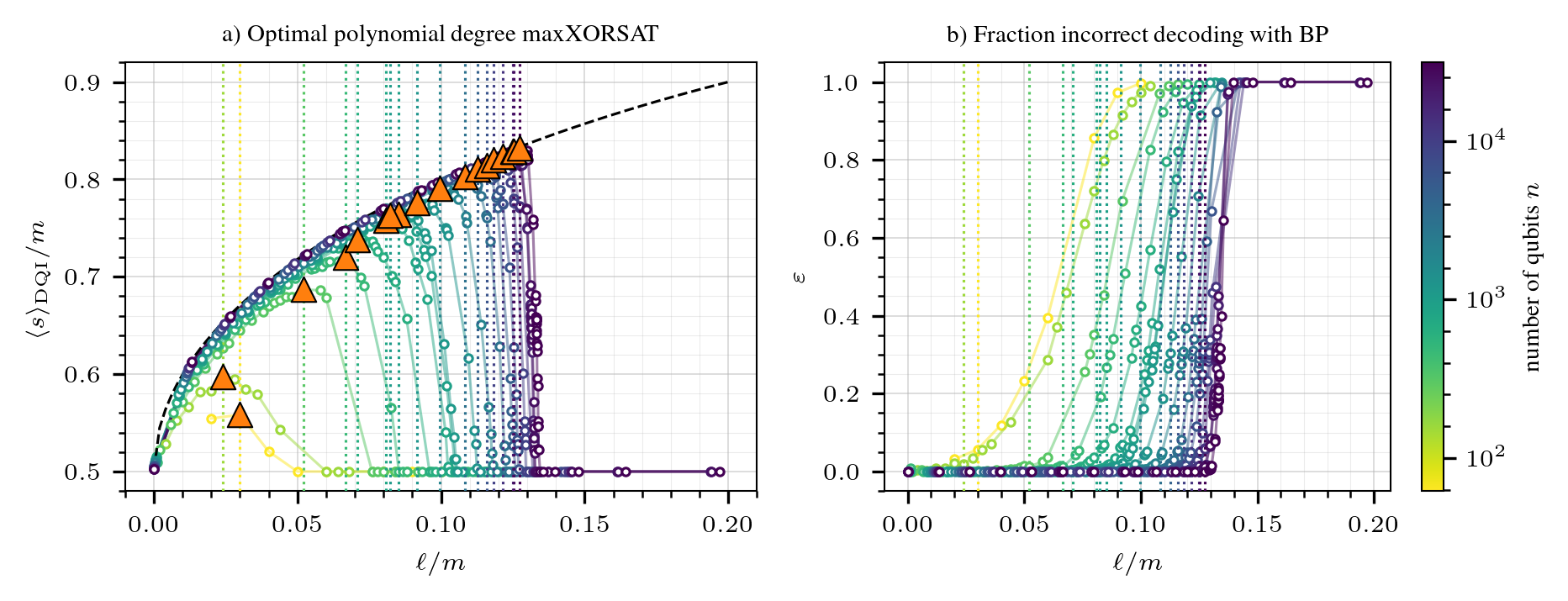}
    \caption{
        \textbf{Scaling behavior of DQI for max-XORSAT -- extended.}
        a) Expected performance of DQI vs normalized polynomial degree for all instance sizes, see Eq.~\eqref{eq:dqi_score_eps}.
        Each individual line is the equivalent to the orange points in Fig.~\ref{fig:ell_dqi_max-XORSAT}a.
        The dashed line corresponds to the semi-circle law from Eq.~\eqref{eq:dqi_score}.
        b) Fraction of incorrect decoding vs normalized polynomial degree for all instance sizes.
        Each individual line is the equivalent to the green points in Fig.~\ref{fig:ell_dqi_max-XORSAT}a.
        The color encodes the number of qubits in both panels.
        The vertical dotted lines highlight the fraction $\ell/m$ at which the maximum is achieved for every size.
    }
    \label{fig:ell_dqi_max-XORSAT_scaling}
\end{figure}

\begin{figure}[t]
    \centering
    \includegraphics{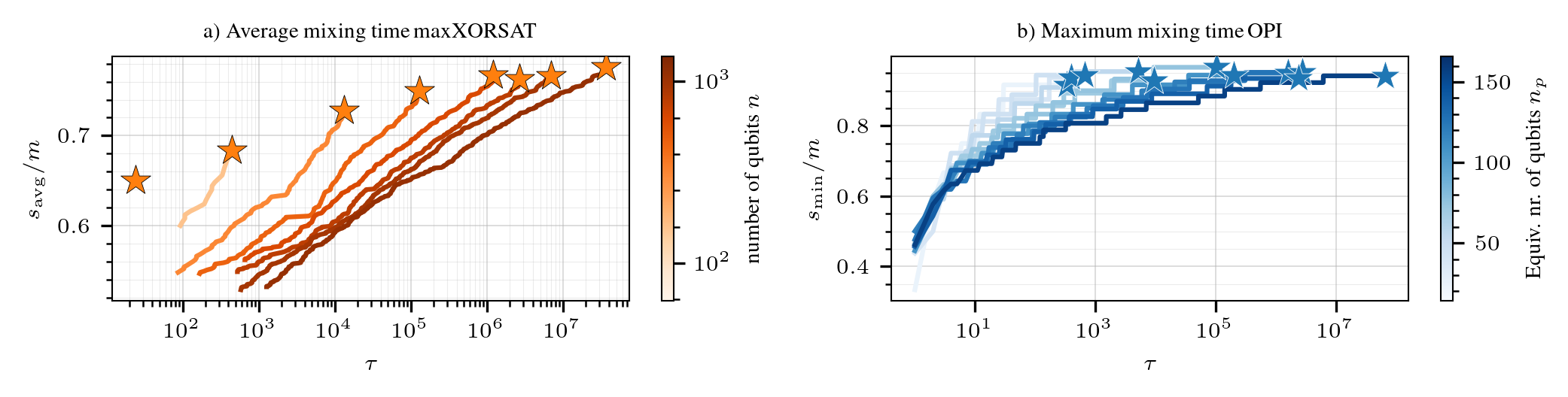}
    \caption{
        \textbf{Scaling behavior of Markov-chain mixing times -- extended.}
        a) Average MCMC trajectory for all instance sizes for max-XORSAT, with $K=100$ independent runs for each instance size.
        Each individual line is equivalent to that of Fig.~\ref{fig:tau_search}a, for different sizes.
        The orange stars correspond to those in Fig.~\ref{fig:tau_search}b.
        b) Average MCMC trajectory for all instance sizes for OPI, with $K=100$ independent runs for each instance size.
        Each individual line is equivalent to that of Fig.~\ref{fig:tau_search}c, for different sizes.
        The blue stars correspond to those in Fig.~\ref{fig:tau_search}d.
        We identify the trajectory corresponding to the smallest equivalent number of qubits as an outlier.
        }
    \label{fig:tau_search_scaling}
\end{figure}

\begin{figure}[t]
    \centering
    \includegraphics{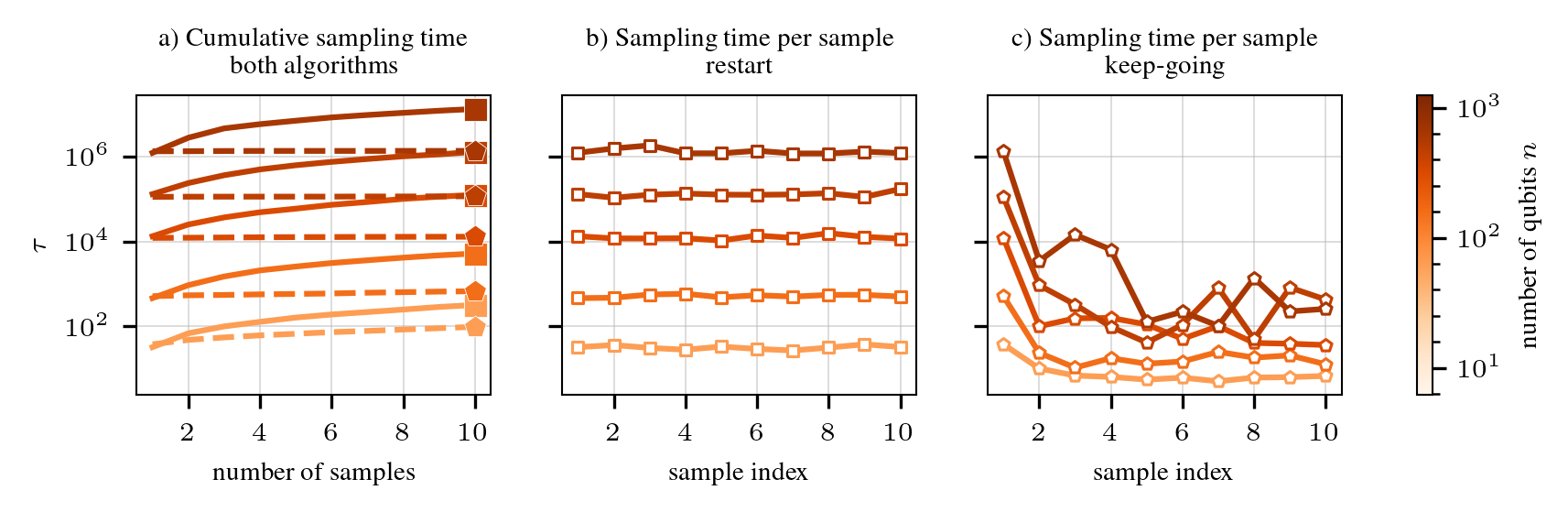}
    \caption{
        \textbf{Scaling behavior of Markov-chain sampling times -- max-XORSAT.}
        a) Average MCMC sampling time for all instance sizes, with $K=100$ independent runs for each instance size.
        The dashed line corresponds to the \emph{keep-going} algorithm, the solid line corresponds to the \emph{restart} algorithm.
        Each individual pair of lines is equivalent to those in Fig.~\ref{fig:tau_sampling}a.
        The squares and pentagons correspond to those in Fig.~\ref{fig:tau_sampling}b.
        b) Average cost \emph{per sample} for the \emph{restart} algorithm, for all instance sizes.
        c) Average cost \emph{per sample} for the \emph{keep-going} algorithm, for all instance sizes.
        The color encodes the number of qubits in all three panels.
        }
    \label{fig:tau_sampling_scaling_max-XORSAT}
\end{figure}

\begin{figure}[t]
    \centering
    \includegraphics{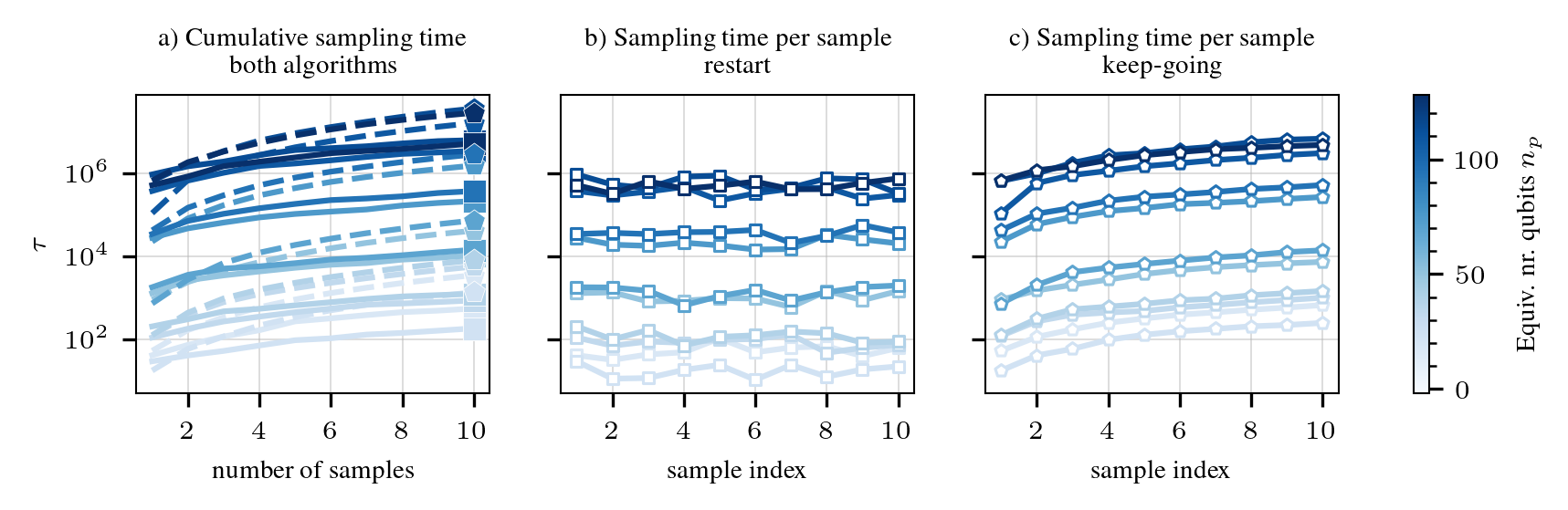}
    \caption{
        \textbf{Scaling behavior of Markov-chain sampling times -- OPI.}
        a) Average MCMC sampling time for all instance sizes, with $K=10$ independent runs for each instance size.
        The dashed line corresponds to the \emph{keep-going} algorithm, the solid line corresponds to the \emph{restart} algorithm.
        Each individual pair of lines is equivalent to those in Fig.~\ref{fig:tau_sampling}c.
        The squares and pentagons correspond to those in Fig.~\ref{fig:tau_sampling}d.
        b) Average cost \emph{per sample} for the \emph{restart} algorithm, for all instance sizes.
        c) Average cost \emph{per sample} for the \emph{keep-going} algorithm, for all instance sizes.
        The color encodes the number of qubits in all three panels.
        }
    \label{fig:tau_sampling_scaling_OPI}
\end{figure}

\end{document}